\documentclass[twocolumn, journal]{IEEEtran}

\usepackage[margin=0.5in]{geometry}  
\usepackage{amsmath}
\usepackage{amssymb}
\usepackage{amsthm}

\usepackage[dvips]{graphicx}  
\usepackage{color}  
\usepackage[tight]{subfigure}
\usepackage{bm}
\usepackage{booktabs}
\usepackage[boxruled, linesnumbered]{algorithm2e}
\usepackage{nomencl}
\usepackage[geometry]{ifsym}
\usepackage{color} 

\usepackage{tikz}  
\usetikzlibrary{calc,trees,positioning,arrows,chains,shapes.geometric, %
    decorations.pathreplacing,decorations.pathmorphing,shapes,          %
    matrix,shapes.symbols}

\newcommand{\mc}{\mathcal}
\newcommand{\mb}{\mathbf}

\newcommand{\mbb}{\mathbb}

\usepackage{algorithm2e}

\newtheorem{define}{Definition}   

\newtheorem{lem}{Lemma}

\newtheorem{thm}{Theorem}

\newtheorem{conj}{Conjecture}

\title{Distribution System Outage Detection using Consumer Load and Line Flow Measurements}

\author{\IEEEauthorblockN{
Raffi Sevlian\IEEEauthorrefmark{1}\IEEEauthorrefmark{0},
Yue Zhao\IEEEauthorrefmark{1}\IEEEauthorrefmark{3},
Ram Rajagopal\IEEEauthorrefmark{2}\IEEEauthorrefmark{0},
Andrea Goldsmith\IEEEauthorrefmark{1} \IEEEauthorrefmark{0},  
H. Vincent Poor\IEEEauthorrefmark{3}\IEEEauthorrefmark{0}
}

\IEEEauthorblockA{\IEEEauthorrefmark{1}Department of Electrical Engineering, Stanford University}
\IEEEauthorblockA{\IEEEauthorrefmark{2}Department of Civil and Environmental Engineering, Stanford University}
\IEEEauthorblockA{\IEEEauthorrefmark{3}Department of Electrical Engineering, Princeton University}

\thanks{This research was supported in part by the DTRA under Grant HDTRA1-08-1-0010, in part by the Tomkat Center, in part by Powell Foundation Fellowship, in part by the National Science Foundation award CIF1116377,  in part by the Air Force Office of Scientific Research under MURI Grant FA9550-09-1-0643, and in part by the Office of Naval Research, under Grant N00014-12-1-0767.  } }

\begin{document}
\maketitle

\begin{abstract}
An outage detection framework for power distribution networks is proposed. 
Given the tree structure of the distribution system, a method is developed combining the use of real-time power flow measurements on edges of the tree with load forecasts at the nodes of the tree.
A maximum a posteriori detector {\color{black} (MAP)} is formulated for arbitrary number and location of outages on trees which is shown to have an efficient detector.
A framework relying on the maximum missed detection probability is used for optimal sensor placement and is solved for tree networks.
Finally, a set of case studies is considered using feeder data from the Pacific Northwest National Laboratories.
We show that a 10\% loss in mean detection reliability network wide reduces the required sensor density by 60 \% for a typical feeder if efficient use of measurements is performed. 
\end{abstract}
\section{Introduction}
\label{section:introduction}
Outage detection and management has been a long-standing problem in power distribution networks. 
Outages are caused by protective devices closing off a part of the network to automatically isolate {\color{black} faults}.
Usually, a short circuit fault will trigger this protective operation.
We employ the term \emph{outage detection} to denote the task of finding the status of the protective devices, and the term \emph{fault detection} to denote finding the faults that caused the resulting outage situation.

Many methods for outage and fault detection based on artificial intelligence have been developed.  
Outage detection is often performed prior to fault detection and can greatly improve the accuracy of fault diagnosis.  
For outage detection, fuzzy set approaches have been proposed based on customer calls and human inspection \cite{sumic1996fuzzy}, and based on real-time measurement with a single sensor at the substation \cite{romero2005inference}.  
In networks where supervisory control and data acquisition (SCADA) systems are available, a subset of the protective devices' status can be obtained via direct monitoring. 
When two-way communications from the operator and the smart meters are available, AMI polling has been proposed to enhance outage detection \cite{mak2012synchronizing}. 
There have also been knowledge based systems that combine different kinds of information (customer calls, SCADA, AMI polling) \cite{liu2002knowledge}.  
For fault detection, using only a single digital transient recording device at the substation, fault location and diagnosis systems have been developed based on fault distance computation using impedance information in the distribution system \cite{zhu1997automated}.  
Using only the outage detection results, i.e., the status of the protective relays, expert systems have been applied to locate the underlying faults \cite{fukui1986expert}.  
Incorporating voltage measurements in the distribution system with the outage detection results, fault detection methods based on knowledge based systems have been proposed \cite{balakrishnan1990computer}.  
Fault detection that uses fault voltage-sag measurements and matching has been proposed in \cite{pereira2009improved}, \cite{kezunovic2011smart}.  
Fault diagnosis based on fuzzy systems and neural networks have also been proposed that can resolve multiple fault detection decisions \cite{srinivasan2000automated}.
{\color{black}Existing} outage and fault detection methods based on artificial intelligence do not provide an analytical performance metric, so it is in general hard to examine their optimality.
Their performance can however be evaluated numerically and in simulation studies.
Moreover, because of this lack of an analytical metric, while some of the existing approaches depend on near real-time sensing (e.g. SCADA), they do not provide guidance on where to deploy the limited sensing resources within the distribution system.  

A major alternative to these mechanism is the so called last gasp, where area's in outage will notify, via distress signal that they are out of power.
These provide a duplicate method of outage detection which can be combined with the proposed methods here.
In fact, combining both of these methods can further reduce the time to outage in practical scenarios.

The proposed sensing and feedback framework exploits the combination of \emph{real-time sensing and feedback} from a limited number of power flow sensors and the \emph{infrequent load updates} from AMI or forecasting mechanisms.  
This can is practically possible since there is a growing number of deployments of distribution system line measurements \cite{Sentient2016}, which can measure line current with high precision.

\section{Problem Formulation and Main Contributions}
\label{section-problem-formulation}
Consider a power distribution network that has a tree structure. 
Power is supplied from the feeder at the root, and is drawn by all the downstream loads.  
An outage is a protective device isolating a faulted area.   
When this occurs, the loads downstream of the faulted area will be in outage.
We investigate the optimal design and performance of automatic outage detection systems with the use of the following two types of measurements:  
\begin{itemize}
\item [] \textbf{Noisy Nodal Consumption} typically in the form of forecasts which have forecast errors that must be taken into account.
\item [] \textbf{Error Free Edge Flows} which typically come from real-time SCADA measurements of the power flows on a fraction of the lines.
\end{itemize}

The issue of noisy and error free measurement comes from the fact that loads come from delayed information which needs to be forecasted, while SCADA systems have real time communication potential.
{\color{black} This work assumes lossless power flow, but can also be applied in the case of current measurements on the line and load level.
This can be done, since practical distribution line sensing is accurate in terms of current measurements, and smart meter interval data provides power and voltage information, making current inference possible.}

The main contributions of this work are the following:
\begin{itemize}
	\item [] \textbf{Outage Detection}
	We formulate the problem of detecting any number of possible outages via nodal and edge measurements as a general hypothesis testing problem where the number and locations of outages are unknown.
	We show that this general formulation results in a computationally efficient decentralized hypothesis detector.
	\item [] \textbf{Sensor Placement} 
	We use the decentralized nature of the detector provide a optimal sensor placement with respect to the maximum missed detection error for all hypotheses.
\end{itemize}

\section{System Model and Notation}   
\label{section-System-Model-and-Notation}

\begin{figure}[h]
  \centering
  \includegraphics[scale=0.4]{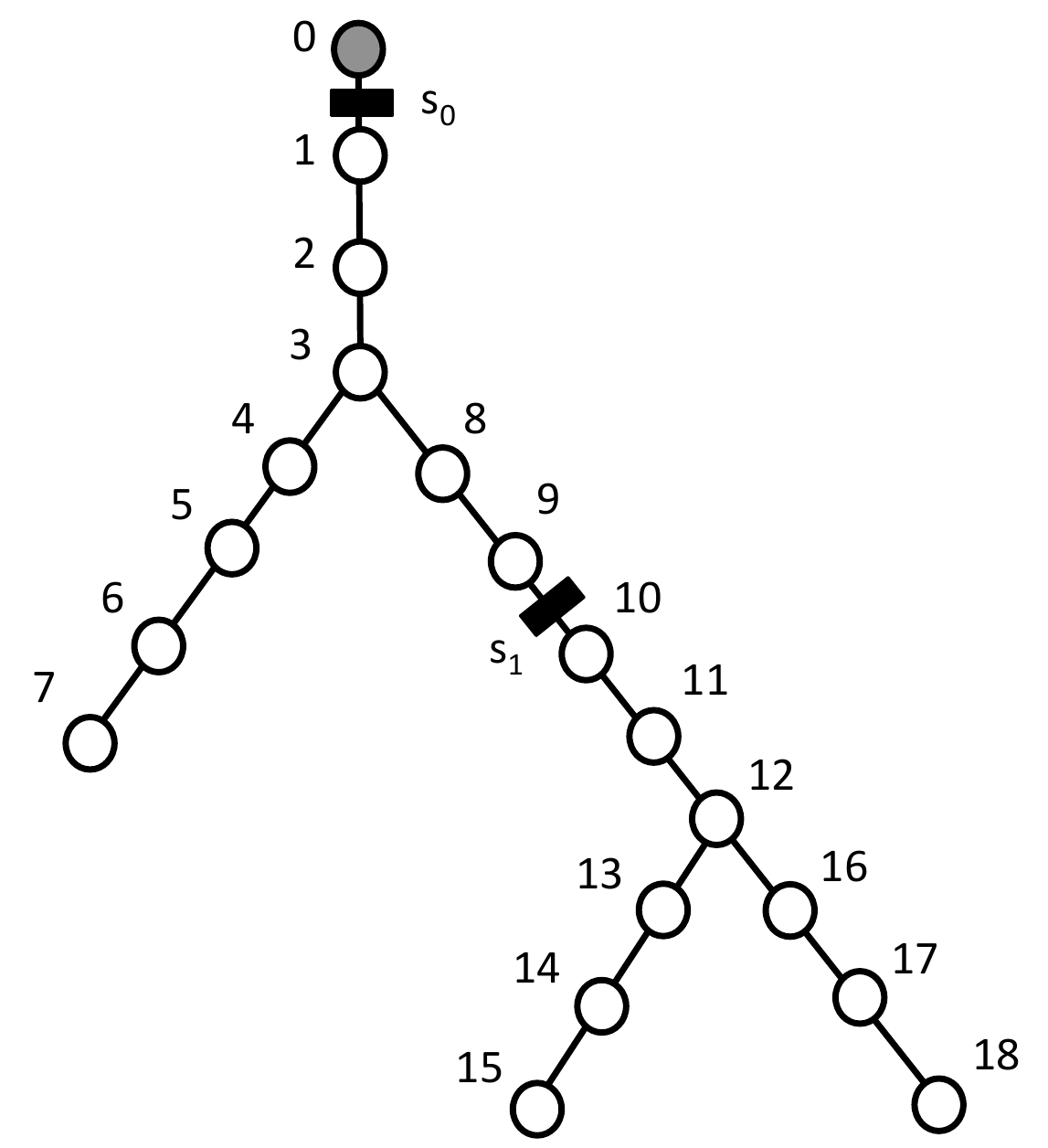}  
  \label{fig:example_tree_10_node}
  \caption{
  Example tree $\mc{T}_1$ used to illustrate various properties.
  Each node in the network is numbered.  
  Node $v_n$ is connected to it's parent via edge $e_n$ consuming $x(v)$ power at each node.
  Two flow measurement sensors $s_0, s_1$ along with load pseudo measurements $\hat{x}(v)$. 
  }
  \label{fig:example_tree_10_node}
\end{figure}

\smallskip
\noindent\textit{Topology of the Distribution System}:
The vertices in the distribution network are indexed by $V = \{v_0,v_1,\ldots,v_N\}$, with bus $v_0$ denoting the root of the tree. 
We index by $e_n$ the line that connects bus $v_n$ and its parent node.  

\smallskip
\noindent\textit{Outage Hypothesis Model}: 
Outages are modeled as disconnected edges corresponding to protective devices disconnecting loads on a network. 
For example, consider single line outages in a tree with $N$ edges: 
In this situation, there will exist $N$ single edge outage hypotheses and a single non-outage situation.
Let $\mc{H}^{1} = \{e_1, \hdots, e_{N} \cup \emptyset \}$ be the set of all single outage hypotheses for a tree $\mc{T}$.

We consider a more general case of an unknown number and location of potential outages.
{\color{black} We define the set of up to k edge outages $\mc{H}^{k}$ as the set of k edge hypothesis.}
This set follows:
\begin{align}
\mc{H}^{k} = \underbrace{ \mc{H}^{1} \times \mc{H}^{1} \hdots \mc{H}^{1} }_{\text{k times}}.
\end{align}

   
\smallskip
\noindent\textit{Load Model:}
Each node $v$ in the graph has a consumption load $x(v)$.
The forecast of each load is $\hat{x}(v)$ with error $\epsilon(v) = x(v) -\hat{x}(v)$.   
We assume errors are mutually independent random variables that follow $\epsilon(v) \sim N(0, \sigma(v)^2)$.  
Given the forecasts we treat the true load, which is unknown to us, as a random variable $x(v) \sim N( \hat{x}(v), \sigma^2(v))$.
In the vector case, we have
\begin{align}
\hat{\mb{x}} \sim N(\mb{x}, \Sigma) \label{eq:forecast-model}
\end{align}
where we can assume $\Sigma$ is a diagonal covariance matrix.

\smallskip
\noindent\textit{Measurement Model:} 
For any edge $e$,  denote by $s$ the power flow on it towards all active downstream loads.
The measured flow depends on the network topology, outage situation and the true loads. 
The sensor placement is denoted as $\mc{M}$ with $\mc{M} \subset E$. 
The vector of all measurements is $\mathbf{s} \in R^{|\mathcal{M}|}$.

Given a tree $\mc{T}$ assume hypothesis $H$ corresponds to the outage of any number of disconnected edges.
The measured power consumption of the $i^{th}$ sensor measurement under hypothesis $H \in \mc{H}^{k}$ is 
\begin{align}   
s_i(H) = \sum_{v \in {\color{black} V_{i}(H) } } x(v), \label{eq-flow-measurement} 
\end{align}  
where the set {\color{black} $V_i(H)$} indicate the set of vertices to be summed over under any particular hypothesis.

A general representation of the observed flow is the following.
The set of full flow observations $\mb{s}$, given a particular hypothesis $H \in H^{k}$, we can represent the observations as:
\begin{align}
\mb{s} = \Gamma_{H} \mb{x}~\forall H \in \mc{H}^{k} \label{eq-matrix-flow-relation},
\end{align}
where $\Gamma_{H} \in \{0, 1 \}^{ |\mc{M}| \times |V| }$.
Here $\Gamma_{H}$ is generated for each hypothesis and we assume the forecast error covariance $\Sigma$ can be estimated from the load forecasting process.
      
\section{General Outage Detection}
\label{subsection-Detection-Detection}

Consider the general outage detector.
Given the vector of load forecasts, $\hat{\mb{x}}$, nominal forecast error $\Sigma$ and real time load flows $\mb{s}$ along a set of branches, the detector must determine the correct number and location of each edge in outage $H \in \mc{H}^{k^{\star}}$. 

{\color{black} These are single snapshot values of load forecast and line flow.
A multi period detection framework can be analyzed in a similar fashion.}
We first construct a simple but naive multiple hypothesis detector relying on a maximum likelihood estimator.
Consider the flow model in eq. \eqref{eq-matrix-flow-relation}, relating the true load at each node to the observed flow on the network.

Given the forecast model in eq. \eqref{eq:forecast-model} and the hypothesis model in eq. \eqref{eq-matrix-flow-relation}, the Maximum a Posteriori detector is the following:
\begin{align}
\{k^{\star},~\hat{H} \} &  \in \underset{ H \in \mc{H}^{k} }{\arg\max}  \Pr\left( \mb{s}~|~\hat{\mb{x}},~H \right) \label{general-ML}
\end{align}
See Appendix \ref{subsection-MAP-Detection-for-Outage-Hyptheses}, for details.

The flow likelihood can be computed as follows:
\begin{align}
\mb{s} | \{ \hat{\mb{x}},~H \} &= \Gamma_{H} \mb{x} \nonumber \\
                &= \Gamma_{H} ( \mb{\hat{x}} + \mb{\epsilon} ) \nonumber \\
                &\sim N \left( \Gamma_{H} \mb{\hat{x}}, \Gamma_{H} \Sigma \Gamma^{T}_{H} \right)~~\forall H \in \mc{H}^{k} \label{eq-flow-forecast-prob} 
\end{align}
Eq. \eqref{eq-flow-forecast-prob} allows us to evaluate a likelihood under each possible hypothesis.
{\color{black} Therefore, a naive detector will enumerate every possible outage, evaluate it's likelihood, and choose the maximum.}
This is difficult for the following reasons:
\begin{enumerate}
\item The set $\mc{H}^{k}$ is of size ${ |E| \choose k }$, so computing the set $\mc{H}^{k}$ can be very expensive.  
Enumerating the entire maximum likelihood detector requires $\sum_{k=1} { |E| \choose k }$ evaluations.
\item Many of the potential hypotheses map to  the same observed flows, therefore the detector output is not unique.
{\color{black} This occurs when one edge is a descendant of another.} 
\item {\color{black} Missed detection errors in a multivariate hypothesis testing framework can only be evaluated via monte carlo testing.
There is no insight in optimizing placement.}
\end{enumerate}  
   
\section{Decoupled Maximum Likelihood Detection} 
\label{section-Decoupled-Maximum-Likelihood-Detection} 

{\color{black} We show that the issues regarding the general maximum likelihood detector can be overcome by decoupling the hypotheses and the observations, given the tree structure of the outage detection problem.} 
This leads to a simple decentralized detector which is equivalent to eq. \eqref{general-ML}, where each decision is a scalar hypothesis test.
This leads to an efficient hypothesis enumeration, detection and error evaluation.

In the following sections, we show the following:
\begin{enumerate}
\item The original search space $\mc{H}^{k}$ can be replaced by a set $\mc{H}_{u}$ of uniquely detected outages, due to the tree structure of the network.
\item {\color{black} Processing zero/positive flow information reduces the search space from $\mc{H}_{u}$ to $\mc{H}^{+}_{u}$.
which decouples into a product set of local hypotheses: $\mc{H}^{+}_u = \underset{ A \in \mc{A}^{+} }{\bigwedge} \mc{H}_u^{+}(A)$.
Where, $A$ indicates a local area with it's own hypothesis set.}
\item {\color{black} The joint likelihood function $\Pr ( \mb{s}~|~\hat{\mb{x}},~H )$ decouples along each area.}
\end{enumerate}
Combining these results leads to a decentralized detector which can be solved easily.

\subsection{Unique Outages} 
\label{subsection-unique-outages}

{\color{black} Maximizing the likelihood of observations over the k outage set can lead to a non-unique solution.}
An alternative is to only consider uniquely detectable outages, where no possible outage event is downstream of any other.

Define the set of outage hypotheses $\mc{H}_{u}$ (u is for unique) as follows:
{\color{black}
\begin{align}
\mc{H}_{u} =& \{ H \in \mc{H}^{k}  \text{ for some k, s.t no two edges} \nonumber \\ 
                    & \text{ are descendant of each other.} \}
\end{align}
This definition is not constructive, but useful.}
Consider tree $\mc{T}_2$ shown in Figure \ref{fig:Hu_simple_example_graph}.
Here $\mc{H}_{u}$ can be enumerated by simple observation.
\begin{align}
\mc{H}_u &= \{ \emptyset,~e_1,~e_2,~e_3,~e_4,~e_5,~(e_3 \times e_5),~(e_4 \times e_5)  \} \label{T2-example}
\end{align}
There is a single non-outage hypothesis $\emptyset$, and 5 single outage hypotheses, and 2 double outage hypotheses.

For a more general case (Figure \ref{fig:example_tree_10_node}), enumerating this set for a tree can be performed recursively. 
The set $\mc{H}_u$ can be enumerated using a 'branch-network' based on the original tree.
The tree in Figure \ref{fig:example_tree_10_node} is depicted in Figure \ref{fig:Hu_recursive_graph_network} with nodes and edges removed which highlights the various branches of the graph.
Each set of branches are aggregated as a node to be traversed, in the hypothesis enumeration procedure.

Consider a {\color{black} set} function $E(b) = \{e \in E : e \text{ is along branch } b_i \}$ to enumerate the set of edges on a branch.
Given the two examples, we have the following branch-edges:
\begin{itemize}
\item $\mc{T}_1$:
$E(b_{1}) = \{ e_{1}, e_{2}, e_{3} \}$, $E(b_{2}) = \{ e_{4}, e_{5}, e_{6}, e_{7}  \}$, $E(b_{3}) = \{ e_{8}, e_{9}, e_{10}, e_{11}, e_{12} \}$, $E(b_{4}) = \{ e_{13}, e_{14}, e_{15} \}$
and $E(b_{5}) = \{ e_{16}, e_{17}, e_{18} \}$.
\item $\mc{T}_2$: $E(b_{1}) = \{ e_{1}, e_{2} \}$, $E(b_{2}) = \{ e_{3}, e_{4} \}$, $E(b_{3}) = \{ e_{5} \}$.
\end{itemize}
\begin{figure}[h]
\hspace{-5mm}
\subfigure[][]{
\includegraphics[scale=0.3]{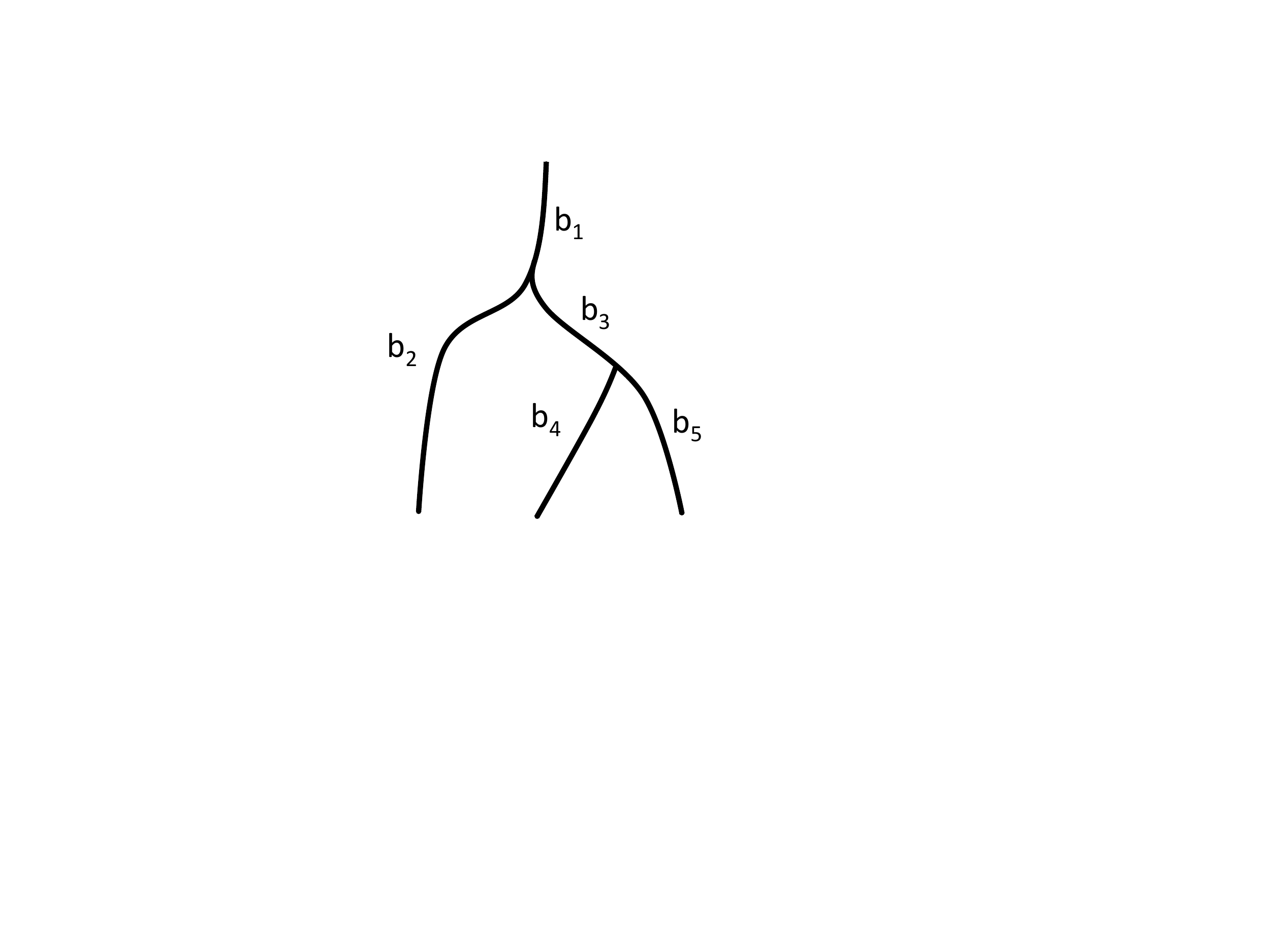}  
\label{fig:Hu_recursive_graph_network}
}
\subfigure[][]{
\includegraphics[scale=0.2]{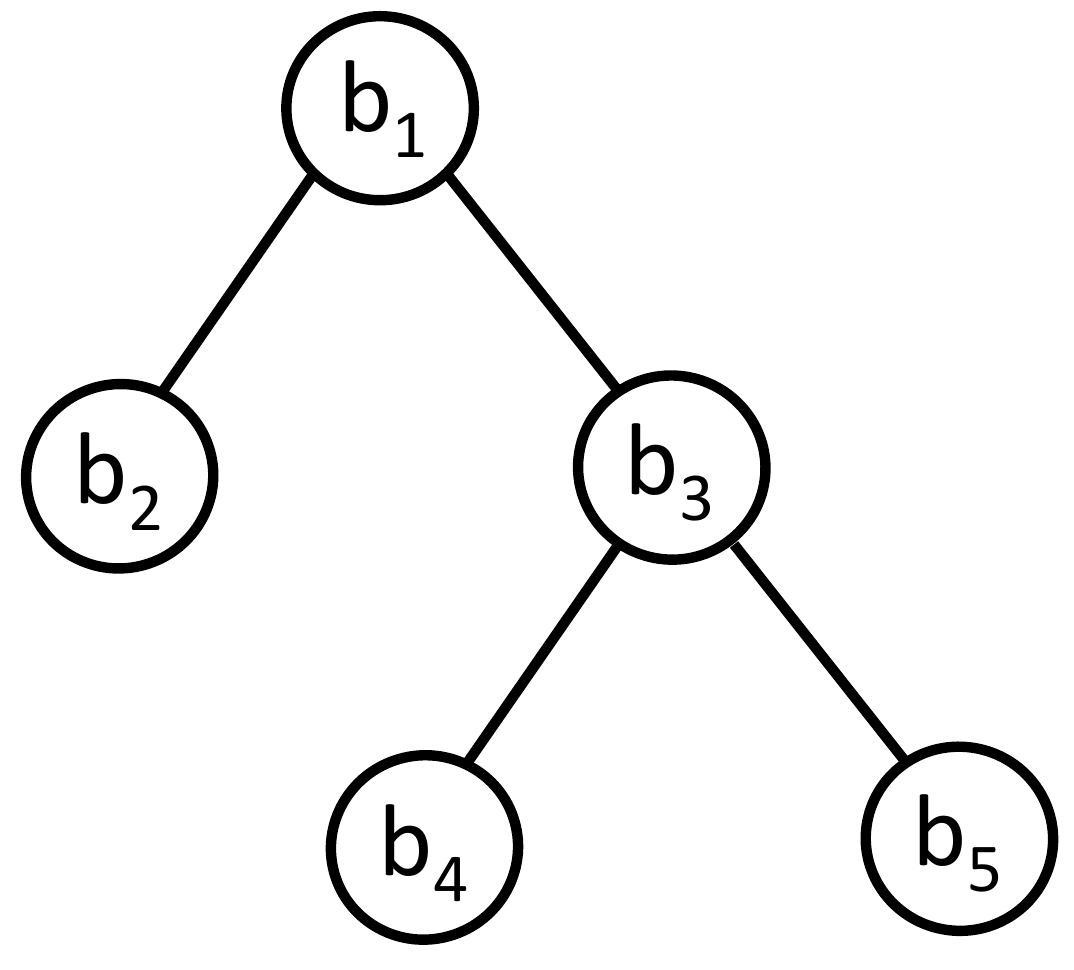}  
\label{fig:Hu_recursive_branch_network}
}
\subfigure[][]{
\includegraphics[scale=0.5]{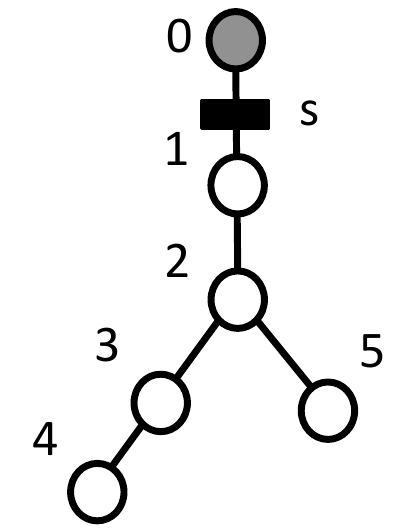}  
\label{fig:Hu_simple_example_graph}
}
\subfigure[][]{
\includegraphics[scale=0.25]{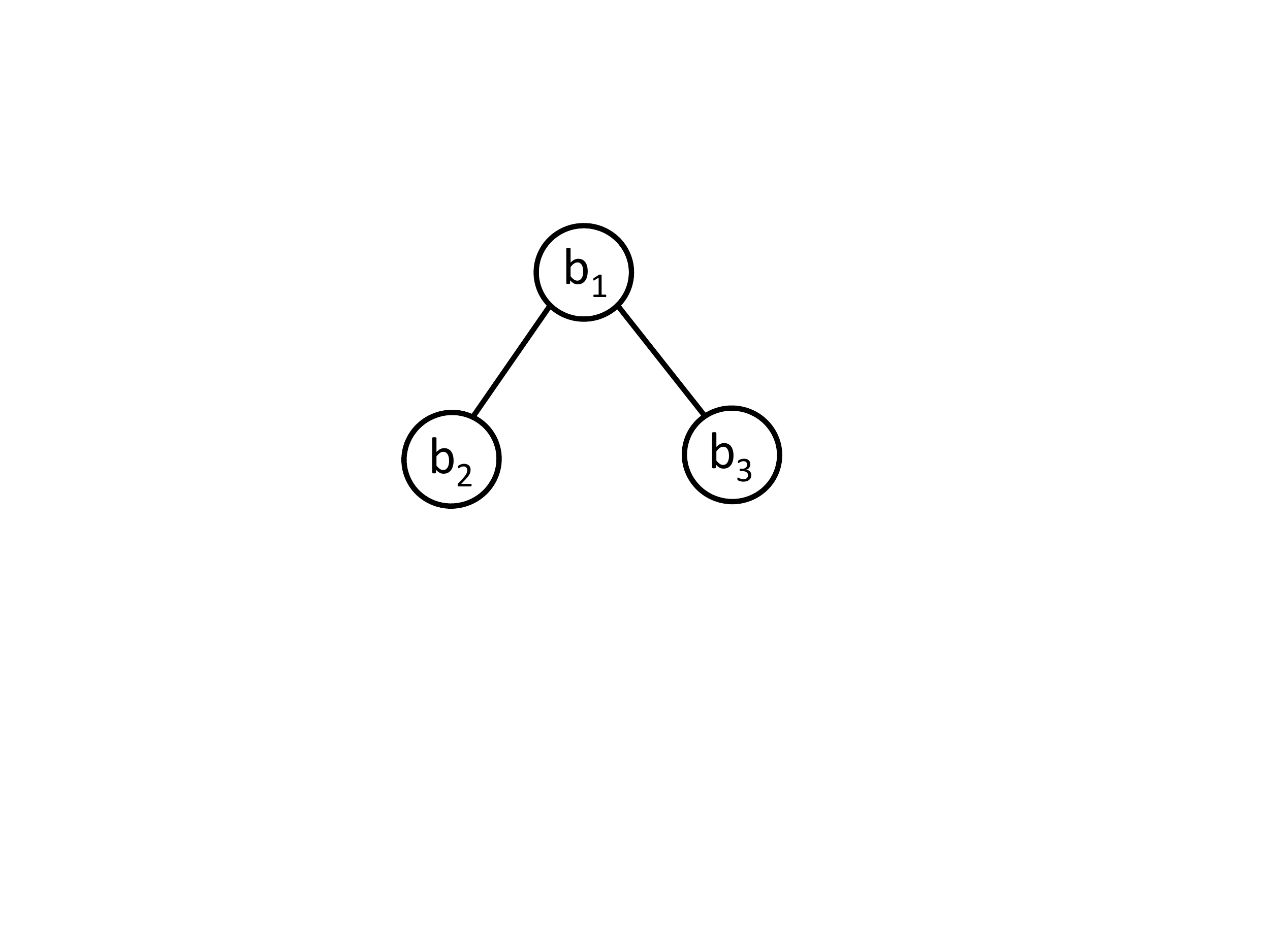}  
\label{fig:Hu_simple_example_branch_graph}
}
\caption{   
\ref{fig:Hu_recursive_graph_network}   Branches of tree $\mc{T}_1$.
\ref{fig:Hu_recursive_branch_network} Branch network for tree $\mc{T}_1$.
\ref{fig:Hu_simple_example_graph} Simple tree network $\mc{T}_2$.
\ref{fig:Hu_simple_example_branch_graph} Branch graph for $\mc{T}_2$. }
\end{figure}
Using this definition, we propose the following recursive definition of the set $\mc{H}_u(b)$, which indicates the set of hypotheses formed from branch $b$ and all descendants.
From this definition, it is clear that $\mc{H}_u \equiv \mc{H}_u(b_1)$, since this is the root branch of the tree.
\begin{align}
\mc{H}_u(b) = E(b) \cup \left(  \bigcup_{ \mb{b} \in \mbb{P}( \textbf{child}(b) ) }   \left(  \underset{b \in \mb{b}} { \bigwedge } \mc{H}_{u}(b)  \right)  \right) \label{eq-recursive-Hu}
\end{align}
Here $b$ is the current node, and $\textbf{child}(b)$ is the set of children of $b$ in the branch tree.
The set $\mbb{P}( \textbf{child}(b) )$ is the power set of all the child branches, where any element of the power set is $\mb{b}$.
Note that if a branch has no descendants, we merely evaluate $E(b)$, since the remaining terms are null.
Eq. \eqref{eq-recursive-Hu} is quite unwieldy, but can be interpreted easily.
For tree $\mc{T}_2$ in Figure \ref{fig:Hu_simple_example_graph} the child branches are $\textbf{child}(b_1) = \{b_2, b_3 \}$ while the power set is:
\begin{align}
\mbb{P}( \{b_2,  b_3\} ) = \{ \underbrace{\{ \emptyset \}}_{\mb{b}_0}, \underbrace{\{  b_2  \}}_{\mb{b}_1}, \underbrace{ \{  b_3  \}  }_{\mb{b}_2},  \underbrace{ \{b_2, b_3 \}  }_{\mb{b}_3} \}.
\end{align}
{\color{black}
Evaluating \eqref{eq-recursive-Hu}, we arrive at the following:
\begin{align}
 \mc{H}_u(b_1) &= E(b_1) \cup \left( \underset{b \in  \{ \emptyset \} } { \bigwedge } \mc{H}_{u}(b) \right) \cup \left( \underset{b \in \{ b_2 \} } { \bigwedge } \mc{H}_{u}(b) \right)   \nonumber  \\ 
                		& ~~~~~\cup \left( \underset{b \in \{ b_3 \} } { \bigwedge } \mc{H}_{u}(b) \right) \cup \left( \underset{b \in  \{ b_2, b_3 \} } { \bigwedge } \mc{H}_{u}(b) \right)   \label{eq:recursion-T2-A} 
\end{align}  
We rely on the following definitions:
\begin{itemize}
\item [D1] Base case: $\underset{b \in  \{ \emptyset \} } { \bigwedge } \mc{H}_{u}(b) = \emptyset$.
\item [D2] Hypotheses double counting: $e_i \cup e_i = e_i$, including the empty set $\emptyset \cup \emptyset = \emptyset$.
\item [D3] Cross product reduction: $\emptyset \times e_i = e_i$, which implies $E(b) \times \emptyset = E(b)$.
\end{itemize}
For example, eq. \eqref{eq-recursive-Hu} is applied to $\mc{T}_2$ as follows:
For $b_1$, we have that 
\begin{align}
\mc{H}_u(b_1) = \{ E(b_1) \cup \mc{H}_{u}(b_{2}) \cup \mc{H}_{u}(b_{3})  \cup \mc{H}_{u}(b_{2}) \times \mc{H}_{u}(b_{3})   \}. 
\end{align}
For the branches $b_{2}$ and $b_{3}$ we use eq. \eqref{eq-recursive-Hu} and D1 resulting in $\mc{H}_{u}(b_{2}) = \{E(b_2) \cup \emptyset \}$ and $\mc{H}_{u}(b_{3}) = \{E(b_3) \cup \emptyset \}$.
Using D2 and D3, we have the following:
This results in:
\begin{align}
 \mc{H}_u(b_1) &=  \{ E(b_1) \cup \{ \emptyset \} \cup  \nonumber \{ E(b_2) \cup \emptyset \} \cup \{ E(b_3) \cup \emptyset \} 			\nonumber\\
                         & ~~~~ \cup \{ E(b_2) \cup \emptyset \} \times \{ E(b_3) \cup \emptyset \}  \}   \label{eq:HuB3}      					\nonumber \\ 
			&=  \{ \emptyset \cup E(b_1)  \cup \{ E(b_2) \cup E(b_2) \cup E(b_3)  \cup \left( E(b_2) \times E(b_3) \right) \}    \nonumber    \\ 
                		&= \{ \emptyset,~e_1,~e_2,~e_3,~e_4,~e_5,~(e_3 \times e_5),~(e_4 \times e_5)  \} 						\nonumber
\end{align}
}  
Which is identical to simple enumeration.     
At this point, we have reduced the maximum likelihood detector to the following form:
\begin{align}
\hat{H} = \underset{ H \in \mc{H}_{u} }{\arg\max} \Pr\left( \mb{s}~|~\hat{\mb{x}},~H \right)
\end{align}  
Note the equality in the maximization since the restriction of the search space allows us to find a unique solution.  
Notice that this procedure will automatically enumerate all possible number of outages as well as their positions in the graph. 
Under the general model of any number of edge outages, this is the complete enumeration of the hypothesis set which has a high computational burden.

In the following sections we show that (1) the search space decouples to a product space of local hypotheses given binary flow indicators and (2) the likelihood function decouples along these local 'areas', thereby reducing to a decoupled scalar hypothesis test for each local area.

\subsection{Reducing Hypotheses from binary flow information}
\label{subsubsection-Reducing-Hypotheses-from-Zero-Flows}  

\begin{figure}[h]
\centering   
\subfigure[][]{
\includegraphics[scale=0.4]{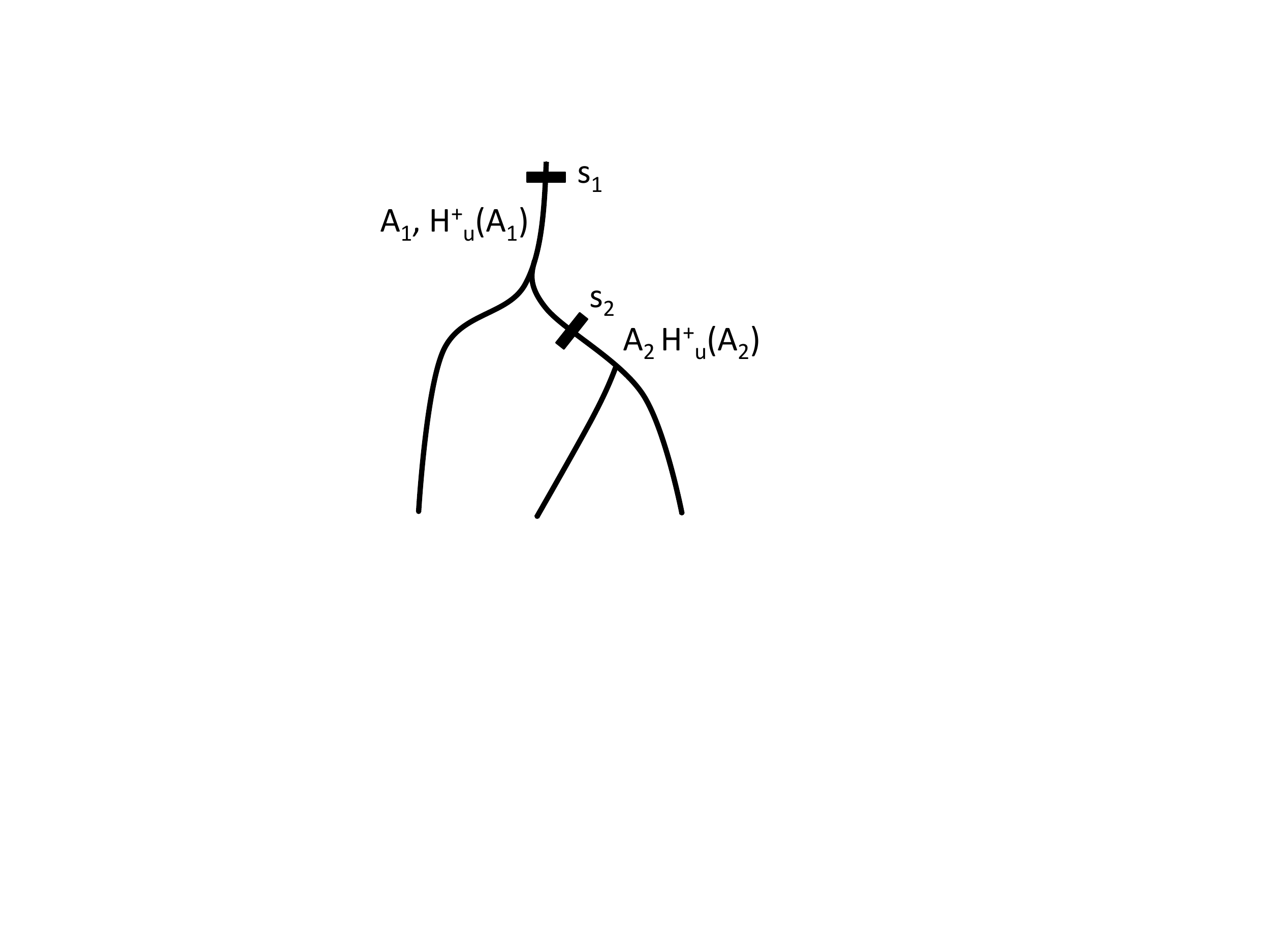} 
\label{fig:multiple_area_network}    
}
\subfigure[][]{
\includegraphics[scale=0.21]{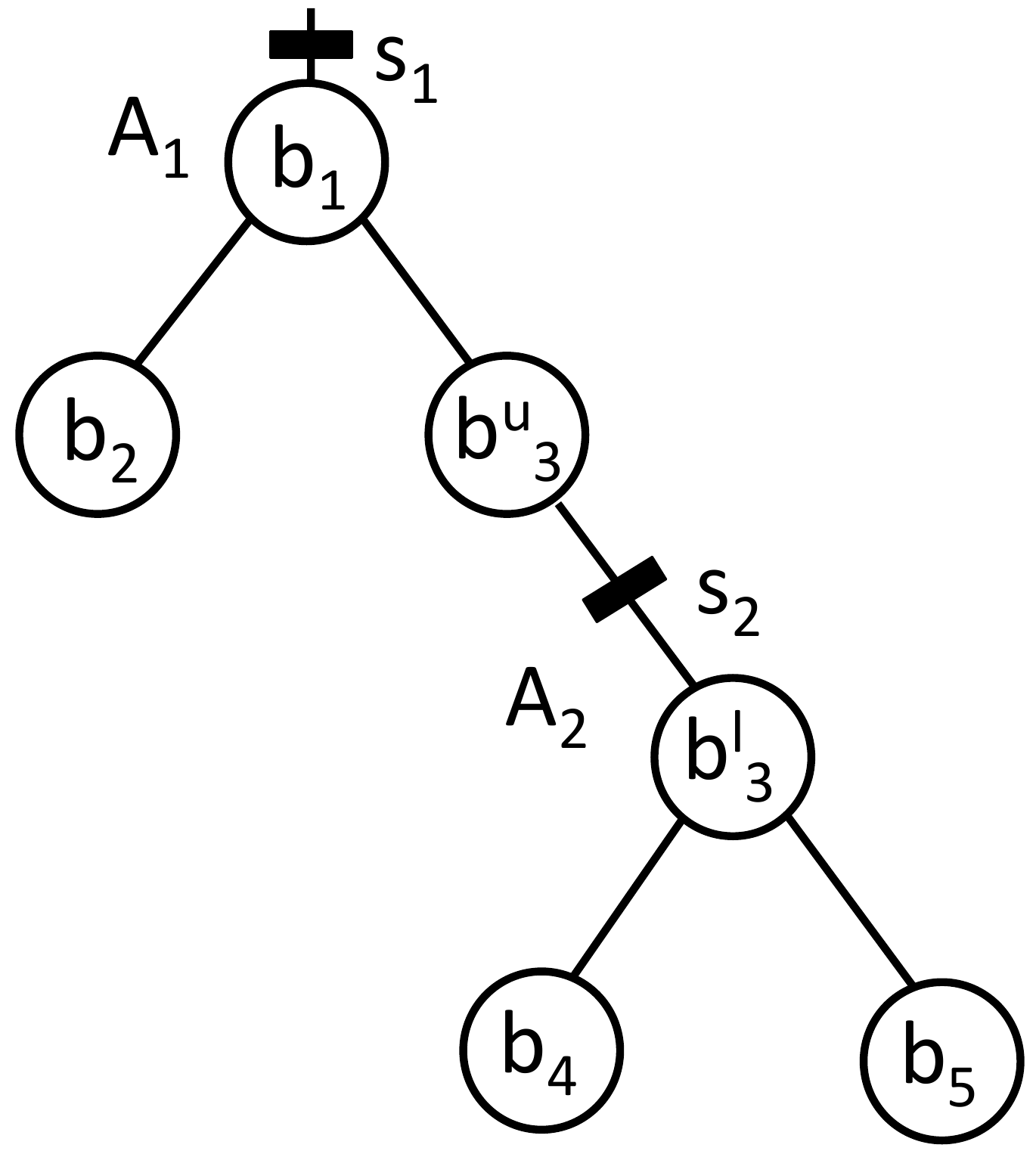}  
\label{fig:Hu_recursive_branch_network_with_flow}
}
\subfigure[][]{  
\includegraphics[scale=0.33]{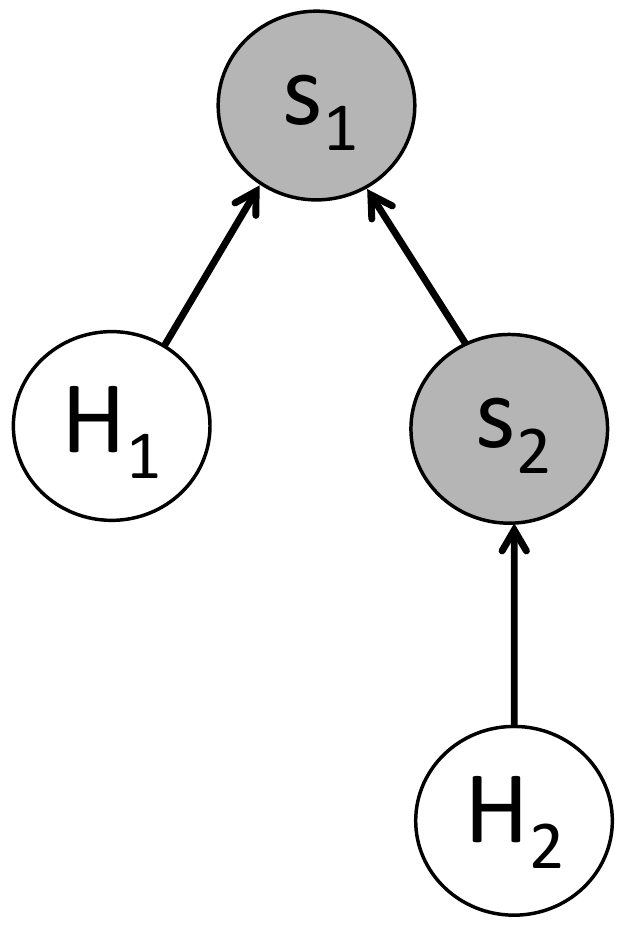} 
\label{fig:outage_dbn_model}    
}  
\caption{
\ref{fig:multiple_area_network} The branch network for network $\mc{T}_1$ with associated sensors $s_1$, $s_2$ and area hypotheses {\color{black} $\mc{H}^{+}_u(A_1)$ and $\mc{H}^{+}_u(A_2)$.}
\ref{fig:Hu_recursive_branch_network_with_flow} 
{\color{black} Each area will keep the local branches.
Branch $b_3$ is split between two areas and processed as different vertices in branch-network ($b^{u}_3$ and $b^{u}_3$}.
\ref{fig:outage_dbn_model} The network can be modeled by a directed graphical model indicating observations (shaded) and variables that must be maximized over.
}
\end{figure}  
The detection problem will encounter a set of positive flows as well as zero's when a sensor is downstream of an edge outage.
The ML detection observations and search space can be reduced as:
\begin{align}
~\hat{H} &= \underset{\forall H \in \mc{H}_{u} }{\arg\max} \Pr\left( \{ \mb{s}_{z}~\mb{s}_{+} \} ~|~\mb{\hat{x}},~H \right) \\
              &= \underset{\forall H \in \mc{H}^{+}_{u} }{\arg\max} \Pr\left( \{ \mb{s}_{+} \} ~|~\mb{\hat{x}},~H \right)
\end{align}
By processing the flow information we reduce the set of hypothesis in the detector from the set $\mc{H}_{u}$ to the reduced set $\mc{H}^{+}_{u}$.
More importantly, we show that $\mc{H}^{+}_{u}$ does not require a recursive enumeration, but actually decouples as a product space of local hypotheses.
This is first shown in an example, then the general form is stated.
Consider $\mc{T}_1$ with branch network with flow measurements in Figure \ref{fig:Hu_recursive_branch_network_with_flow}.
In this example, all branches are unchanged except $b_3$, which is split into $b^u_3$ and $b^l_3$ and is separated by flow measurement $s_1$.
In the case of splitting branch nodes, the upper branch will take the edge with the measurements so $E(b^u_3) = \{e_8, e_9, e_{10} \}$ and $E(b^l_3) = \{e_{11}, e_{12} \}$.

{\color{black} The following illustrative example highlights a general decoupling principle of the hypothesis set $\mc{H}^{+}_u$.}
Let's consider the two cases separately {\color{black} ($s_1, s_2 > 0$ and $s_1 > 0, s_2 = 0$)} which provides the intuition for the general case.
\subsubsection{Case 1}{\color{black} $(s_1 > 0$, $s_2 > 0)$}
This implies that \textit{there cannot be any outage with edges in $b_1$ or $b_3$}, else $s_2 = 0$.
A brute force enumeration of $\mc{H}_u^{+}(b_1)$ is done by enumerating $\mc{H}(b_1)$ according to Figure \ref{fig:Hu_recursive_branch_network}, then removing any terms with $E(b_1)$ and $E(b_3)$ outages.
It can be shown that:
\begin{align*}
\mc{H}_u(b_1) & = \{ \emptyset \cup E(b_1) \cup  E(b_2) \cup ( E(b^{u}_3) \cup \mc{H}_u(b^{l}_3))  \\
		       & ~~~~~~~~~\cup E(b_2) \times ( E(b^{u}_3) \cup \mc{H}_u(b^{l}_3) ) \} 
\end{align*}
Removing the possible outages due to the positive flow information (i.e. any tuple with edges in $E(b^{u}_3)$, or $E(b_1)$) , we have:
\begin{align*}
\mc{H}_u^{+}(b_1) &= \{ \emptyset \cup  E(b_2) \cup \mc{H}_u^{+}(b^{l}_3) \cup \left( E(b_2) \times \mc{H}_u^{+}(b^{l}_3) \right) \}  \\  
                              &=\{ \emptyset_1 \cup  E(b_2) \} \times \{ \emptyset_2 \cup  \mc{H}_u^{+}(b^{l}_3) \}                                         \\ 
                              &= \mc{H}_u^{+}(A_1) \times \mc{H}_u^{+}(A_2)  							                                           
\end{align*}

{\color{black}
We use $A_1$ and $A_2$ to define a local area.
An area is a partition of the original tree, which will decouple the set of all hypotheses into a product space of 'area hypotheses'.
Each area will contain a root sensor (ex. $A_1$ contains $s_1$) and a set of descendent sensors (ex. $\textbf{child}(s_1) = \{ s_2 \}$).

Within a local area, a set of unique outage hypotheses are evaluated $\mc{H}_u^{+}(A_1)$ which satisfy the binary observations of whether any flow is observed along the downstream sensors $\textbf{child}(s_1) >0$.  
The local hypotheses of each area are later combined to form any possible hypothesis from the original enumeration of unique outages.
Appendix \ref{subsection-Extended-Discussion-of-Recursive-Evaluation-of-H_u} contains a more detailed discussion.
}

\subsubsection{Case 2} {\color{black} $(s_1 > 0$, $s_2 = 0)$}
This implies that \textit{all possible outages must contain $b_1$ or $b^u_3$}, else $s_2 > 0$.
First we enumerate $\mc{H}_u(b_1)$ then keep only those elements which lead to $s_2 = 0$ (i.e. every tuple with edges in $E(b^{u}_3)$, or $E(b_1)$) , we have:
\begin{align*}
\mc{H}_u^{+}(b_1) &= \{\emptyset \cup E(b_1) \cup E(b^{u}_3) \cup E(b_2) \times E(b^{u}_3) \} \\
 			 &= \{ \mc{H}_u^{+}(A_1) \} 
\end{align*}
Note that in this case, $\mc{H}_u^{+}(A_2)$ is never evaluated since $s_2 = 0$.
Different positive and zero flow patterns lead to changes in the local hypothesis set $\mc{H}_u^{+}(A)$.

\subsubsection{General Hypothesis Decoupling}
\label{general-hypothesis-decoupling}
{\color{black} A complete description of the set should be $\mc{H}_u^{+}(A, f)$, where {\color{black} $f = \mbb{I}_{\{\textbf{child}(s_i) > 0\} }$}, since this set depends on the binary flow information of the child sensors.
For example Case 2 is $\mc{H}^{+}_u = \mc{H}_u^{+}(A_1, \{1~0\})$ and Case 2 is $\mc{H}^{+}_u = \mc{H}_u^{+}(A_1, \{1~1\}) \times \mc{H}_u^{+}(A_2, \{1~1\})$. }
For any arbitrary tree and flow sensors, the unique hypothesis set conditional on binary flows $\mc{H}_u^{+}$ will decouple according to:
\begin{align}
\mc{H}_u^{+} = \underset{ A \in \mc{A}^{+} }{\bigwedge} \mc{H}_u^{+}(A). \label{eq:general-hypothesis-decoupling}
\end{align}
Where the set $\mc{A}^{+}$ indicates the areas which have a root measurements $s_i > 0$.
Finally, $\mc{H}_u^{+}(A)$ is the local conditional hypothesis, which can be computed in the general case.
{\color{black}
Any given $H_{i} \in \mc{H}^{+}_u$ can be represented by a product of area hypotheses $H_{i} = H_{1, i(1)} \times \hdots \times H_{M, i(M)}$, where for the $k^{th}$ area $H_{k, i(k)} \in \mc{H}_u^{+}(A_k)$.
Index $i(k)$ is the particular index into the $k^{th}$ hypothesis set corresponding to global hypothesis $H_i$.
Appendix \ref{subsubsection-general-hypothesis-decoupling} presents a general algorithm generate the set $\mc{H}_u^{+}(A, f)$, for arbitrary binary flow information.
}
\subsection{Decoupling the Joint Likelihood:}
\label{subsection-Decoupling-the-Joint-Likelihood:}

Due to the model of noiseless flow and forecasted nodal measurements joint likelihood of all observations, given $\hat{\mb{x}}$, and $H \in \mc{H}^{+}_u$, decouple across the areas.
In general we have: 
\begin{align}		
\Pr\left(\mb{s}_{+}~|~\mb{\hat{x}},~H \right) = \prod_{i: A_i \in \mc{A}^{+} } \Pr\left( s_i~|~\textbf{child}(s_i),~\mb{\hat{x}},~H_i  \right) 
\end{align}	
The tree $\mc{T}_1$ is presented here as an example, while the general case can be easily shown as an extension of the example.

Consider again the sub-graph in Figure \ref{fig:multiple_area_network}, where we have already reduced the hypotheses using binary flow information (both $s_0, s_1 > 0$).
{\color{black} 
As discussed in Section \ref{general-hypothesis-decoupling}, a unique hypothesis $H_i \in \mc{H}_u^{+}$ can be represented as $H_i = H_{1, i(1)} \times H_{2, i(2)}$.
Where $H_{1, i(1)} \in \mc{H}(A_1)$ and $H_{2, i(2)} \in \mc{H}(A_2)$.

From eq. \eqref{eq-flow-measurement} the observed flow $s_1$, can be computed as follows:
\begin{align}
s_1 | \{\hat{\mb{x}},~H_{i} \} 
	  &= \sum_{v \in V(H_{1, i(1)} \times H_{i, i(2)}) } x(v)                    				                                                     \label{tree2-s1-decoupling-line1}  \\
	  &= \sum_{v \in V(H_{1, i(1)} \times H_{2, \emptyset}) \setminus V(H_{2, \emptyset}) } x(v) + \sum_{v \in V(H_{2, i(2)}) } x(v) \label{tree2-s1-decoupling-line2}  \\
 	  &= \sum_{v \in V(H_{1, i(1)}) \setminus V(H_{2, \emptyset} )} x(v) + s_2                                                                        \label{tree2-s1-decoupling-line3}  
\end{align}	
Where:
\begin{itemize}
\item Eq. \eqref{tree2-s1-decoupling-line1} is the summation of true loads.
By decoupling of the hypotheses across areas, we represent $H_i$ as the product of the two local hypotheses $H_{1, i(1)}$ and $H_{2, i(2)}$.
\item In \eqref{tree2-s1-decoupling-line2}, this can be separated as the sum of parts
\begin{enumerate}
\item $V(H_{1, i(1)} \times H_{2, \emptyset}) \setminus V(H_{2, \emptyset})$ are the vertices in the summation in $A_1$ independent of what is happening of downstream areas.
$H_{2, \emptyset}$ indicates the non-outage hypothesis in area 2.
\item $V(H_{2, i(2)})$ is the set of vertices in the summation for area Area 2.  
Although $H_{2, i(2)}$ is unknown, $s_2 = \sum_{v \in V(H_{2, i(2)}) } x(v)$, the the unknown hypothesis can be eliminated.
\end{enumerate}
\item Eq. \eqref{tree2-s1-decoupling-line3}, replaces the second summation with the observed flow, since they are equivalent.
Therefore, if we condition on the remaining flow observation, $\textbf{child}(s_1)$, the flow decouple between different areas.
\end{itemize}
  
Finally, since $x(v)$, is not known, a \textit{likelihood function for the net flow in the area} can be constructed given the load forecasts.
The first term in eq. \eqref{tree2-s1-decoupling-line3} is modeled with the following:
\begin{align}
s_1 - s_2 | \{s_2~\hat{\mb{x}},~H \} \sim N(\mu(\mb{\hat{x}}, H ), \sigma(\mb{\hat{x}}, H)).
\end{align}

Evaluating $\mu(\mb{\hat{x}}, H)$, and $\sigma(\Sigma, H)$ can be computed easily and is described in Section \ref{subsection-Evaluating-P_MAX-for-tree-networks}.

Similarly, $s_2$ is decoupled from any hypothesis in $\mc{H}_u^{+}(A_1)$, since the measurement depends only on downstream hypotheses (assuming $s_2$ has some positive flow to begin with).
}
In this example we can decouple the likelihood function as follows.
\begin{align}		
\Pr\left(s_1~s_2 |~\mb{\hat{x}},~H_1,~H_2 \right)  = \Pr\left(s_2 | ~\mb{\hat{x}},~H_2 \right)\Pr\left(s_1 | s_2~\mb{\hat{x}},~H_1 \right) \nonumber
\end{align}		
In this example, this decoupling can be represented as a simple graphical model as shown in Figure \ref{fig:outage_dbn_model}, where each local hypothesis is an unknown variable that must be determined via likelihood maximization.
Conditioning on the the only observations $s_1$, $s_2$, the two hypotheses variables are independent.
{\color{black}
This graphical model formulation is used to show the general case where there may be noise in the flow measurements.
In such a case, the decoupling would not work, but a message passing algorithm can be used.
The graphical model formulation can also be applied in different sensor types.
To show the general case, eq \eqref{tree2-s1-decoupling-line1}-\eqref{tree2-s1-decoupling-line3}, can be extended to a general area network with multiple downstream sensors.
}
\subsection{Decoupled Maximum Likelihood Function:}   
\label{subsection-Decoupled-Outage-Detector:}

We can combine the  results shown so far to a decoupled likelihood function.
\begin{align}		
\{k^{\star},~\hat{H} \} &\in \underset{\forall k, \forall H \in \mc{H}^k }{\arg\max} \Pr\left( \mb{s} |\mb{\hat{x}},~H \right)                   											\label{eq-decouple-derivation-1}   \\
			       &= \underset{\forall H \in \mc{H}_{u} }{\arg\max} \Pr\left( \{ \mb{s}_{+}~\mb{s}_{z} \}|\mb{\hat{x}},~H \right) 											 \label{eq-decouple-derivation-2}  \\
			       &= \underset{\forall H \in \mc{H}^{+}_{u} }{\arg\max} \Pr\left( \mb{s}_{+}  |\mb{\hat{x}},~H \right) 		                										\label{eq-decouple-derivation-3}   \\
			       &= \underset{\forall A \in \mc{A}^{+}~ \forall H_i \in \mc{H}^{+}(A) }{\arg\max} \Pr\left( \mb{s}_{+}  |\mb{\hat{x}},~H_1 \hdots H_M \right)  					 \label{eq-decouple-derivation-4}  \\
			       &= \underset{\forall A \in \mc{A}^{+}~ \forall H_i \in \mc{H}^{+}(A) }{\arg\max} \prod_{i: A_i \in \mc{A}^{+} } \Pr\left( s_i | \textbf{child}(s_i),\mb{\hat{x}},~ H_i  \right) \label{eq-decouple-derivation-5}  \\
			       &= {\color{black} \bigwedge_{i: A_i \in \mc{A}^{+} } \underset{\forall H_i \in \mc{H}^{+}(A_i) }{\arg\max} \Pr\left( s_i | \textbf{child}(s_i),\mb{\hat{x}},~ H_i  \right) }    \label{eq-decouple-derivation-6}  		       
\end{align}
Here eq. \eqref{eq-decouple-derivation-1} is the original maximum likelihood hypothesis test over all $\mc{H}^{k}$ outages.
This is reduced to a search space over $\mc{H}_u$ in eq. \eqref{eq-decouple-derivation-2}.
This further reduces to an even smaller search space $\mc{H}^{+}_u$ due to processing of binary flow information in eq. \eqref{eq-decouple-derivation-3}. 
Eq. \eqref{eq-decouple-derivation-4} decouples $\mc{H}^{+}_u$ to a product space of local search hypotheses.
Eq. \eqref{eq-decouple-derivation-5} decouples the likelihood functions by conditioning on the set of observations.
This likelihood function is a product of terms which only depend on the local hypotheses.
Therefore maximizing this product is equivalent to maximizing each term separately, {\color{black}as in eq. \eqref{eq-decouple-derivation-6}}.
\begin{algorithm}[h]
\KwResult{Maximum Likelihood Hypothesis Detector}
\KwIn{
	[1] Load Forecast/Nominal Statistics: $\mb{\hat{x}}$, $\Sigma$ \\ \hspace{10.5mm}
	[2] Real Time Load: $\mb{s} = \{\mb{s}_{+}, \mb{s}_{z} \}$
}
$\mc{A}^{+} \leftarrow \textbf{prune}-\textbf{areas} ( \mc{A}, \mb{s})$ \\

\For { $ A_i \in \mc{A}^{+}$ } { 
	\textbf{// Generate Local Hypothesis set. } \\
	$\mc{H}^{+}_{u}(A_i) \leftarrow  \textbf{local}-\textbf{hypotheses}( \mc{T}, s_i, d(s_i) )$ \\

	\textbf{// Local MAP Detector} \\ 
	$\hat{H}_i \leftarrow \underset{\forall H_i \in \mc{H}^{+}(A_i) }{\arg\max} \Pr\left( s_i | d(s_i), \mb{\hat{x}},~H_i  \right)$ \\
}
{\color{black}
\textbf{// Combine Local Hypotheses} \\
 $\hat{H} \leftarrow \underset{ A_i \in \mc{A}^{+} } { \bigwedge } \hat{H}_i $ 
}
\caption{Maximum Likelihood Hypothesis Detector.}
\label{alg-decentralized-ML-detector}
\end{algorithm}
The decoupling of the centralized likelihood function in \eqref{eq-decouple-derivation-5}, leads to a simple decentralized detector in Algorithm \ref{alg-decentralized-ML-detector}.
The input is (1) the set of load forecasts $\hat{\mb{x}}$ with their nominal statistics $\Sigma$, (2) and the real time load information $\mb{s}$. 
The function \textbf{prune}-\textbf{areas} simply discards areas with zero flow.
The function  \textbf{local}-\textbf{hypotheses} performs the generation of local hypothesis set $\mc{H}^{+}_{u}(A_i)$ as described in section \ref{subsubsection-general-hypothesis-decoupling}.
   
Finally the local MAP detector is simple to evaluate as a multi-hypothesis test involving scalar gaussian of known means and variances as follows:
For a local area, since $\mc{H}^{+}(A)$ is enumerated, we determine:
\begin{align*}
\hat{H} = \underset{H \in \mc{H}^{+}(A)} {\arg\max}  \left( \frac{ s_i - \sum_{s \in \textbf{child}(s_i)} s - \mu(\mb{\hat{x}}, H) } { \sigma(\Sigma, H) }\right)^2 
\end{align*}
Therefore each decision only depends on an effective measurement 
\begin{align}
\Delta s_i \triangleq s_i - \sum_{s \in \textbf{child}(s_i)} s \label{eq:definition-effective-measurement}
\end{align}
 for each local area.
Computation of the means and variance are discussed in Section \ref{subsection-Evaluating-P_MAX-for-tree-networks}.
\section{Sensor Placement Problem}
\label{section-Placement-Algorithm}

In evaluating a placement, we use the maximum missed detection probability over all hypotheses.
This is a useful metric since we are considering a very large number of alternative hypotheses.
{\color{black} First consider the maximum error $P_{E}(H, \mc{M}) =  \Pr( \hat{H} \neq H ; \text{placement}~\mc{M} )$, where $\hat{H}$ is the optimal solution of the outage detection.}

This can be used for sensor placement evaluation.
\begin{align}
\mc{M}^{\star} = \underset{|\mc{M}|=M}{\arg\min}~ \max_{H \in H_u } P_{E}( H, \mc{M} ) \tag{OPT-1} \label{min-max-hyp-all}
\end{align}
This is difficult to do outside of a combinatorial enumeration of sensor locations and hypotheses.
A suitable proxy we optimize instead is the following:
\begin{align}
\mc{M}^{\star} = \underset{|\mc{M}|=M}{\arg\min}~\max_{A \in A } P_{E}(A ) \tag{OPT-2} \label{min-max-hyp-area}
\end{align}    
Where $P^{max}_{E}(A) = \max_{f, H \in H^+_u(A, f) } P_{E}( H, \mc{M} )$, that is we only search over the hypothesis in the local area.
This second optimization very closely approximates an upper bound to \eqref{min-max-hyp-all}  (see appendix \ref{subsection-proxy-function-optimization} for details). 
Optimization OPT-2 is solved via a bisection method on the following feasibility problem:
\begin{align}
\mc{M}^{\star} &= \underset{|\mc{M}| \leq M}{\text{find}}~\mc{M}  \tag{OPT-3} \label{feasibility-problem} \\
                       &~~~~\text{s.t.}~P_{E}^{\max}(A) \leq P^{target}                  \nonumber
\end{align}
So the minimum $P^{target}$ is determined which yields a solution of size $|\mc{M}| = M$.
This can be solved very efficiently, with the algorithm that follows.
\subsection{Feasibility Placement Algorithm}   
\label{subsection-Placement Algorithm-in-Line-Network}   
The intuition for the greedy placement algorithm is the following.
{\color{black}
Starting from the bottom of a tree, we successively maintain a temporary local area with root sensor in $e_t$.
The root sensor is iteratively moved closer to the root edge $e_1$, while maintaining that the maximum error of all areas is less than $P^{target}$.
This is done by maintaining that the local area has error less than $P^{target}$. 
If this is true we move closer to the root, if not, we place a sensor and start a new area.
Since the objective function decouples across areas, we can maintain that the feasibility problem is always satisfied.
Finally, if the number of sensors are less than $M$, $\mc{M}^{\star}$ is returned.
}
\begin{algorithm}[h]
\KwResult{ Placement for a Tree Network }     
\KwIn{
	[1] Tree network $\mathcal{T}$ \\ \hspace{10.5mm} 
	[2] Nominal loads statistics $\hat{\mb{x}}$, $\Sigma$ \\ \hspace{10.5mm}
	[3] Subproblem Ordering $V_{process}$ \\ \hspace{10.5mm}
	[4] Target error $P^{\text{target}}$
}
\textbf{// Generate node process ordering } \\
$E_{\text{process}} \leftarrow$ {\bf generate}-{\bf edge}-{\bf order}$(\mc{T})$ \label{tree-alt:generate-v-process}

\textbf{// initialize sensor placement as empty } \\
$\mc{M}^{g} \leftarrow \emptyset$ \\

\For { $ e_{t} \in E_{\text{process}}$ } {  \label{tree-alg:v-process-iteration}
	$A~\leftarrow~ $\textbf{construct}-\textbf{area}$( e_{t}, \mc{M}^{g}  )$ \label{tree-alg:construct-area-network} \\
	
	\textbf{// Evaluate the current subtree maximum missed }  \\
	\eIf { $P_{E}^{\max}(A )~\leq~P^{\text{target}}$ } { \label{tree-alg:eval-mmdp}
		\textbf{// continue to next node }
	} {  
		\uIf { $| \bf{child}( v_t )| == 1$  } {	
			$\mc{M}^{g}~\leftarrow~$\bf{line}-\bf{action}$(A, \mc{M}^{g})$
		} 
		\uElseIf{ $| \bf{child}( v_t )| > 1$  } {  
			$\mc{M}^{g}~\leftarrow~$\bf{tree}-\bf{action}$(A, \mc{M}^{g})$
		}
	}
}
\Return $\mc{M}^{g}$
\caption{Solution to optimization \eqref{feasibility-problem} for tree network.}
\label{tree-placement-algorithm}
\end{algorithm}
This framework can be realized in Algorithm \ref{tree-placement-algorithm}.
The inputs to the method are the tree network $\mc{T}$ and the set of nominal load forecasts $\hat{\mb{x}}$ and forecast variance $\Sigma$. 
{\color{black}
To have the effect of starting at the leaf of the network and move our way up to the root, the algorithm will process a sequence of edges $E_{\text{process} }$.
For example in Figure \ref{fig:example_tree_10_node}, $E_{\text{process} } = \{E(b_4)~E(b_5)~E(b_3)~E(b_2)~E(b_1) \}$.
We generate the list $E_{\text{process} }$ in line \ref{tree-alt:generate-v-process} with function {\bf generate}-{\bf edge}-{\bf order}.
}
The function takes the tree $\mc{T}$ and traverses via breadth first search keeping track of the depth of each vertex/edge.
Reversing this list of depths yields a list of nodes to process, {\color{black} the parent edge being $e \in E_{\text{process}}$.}

The greedy solution $\mc{M}^{g}$ must first be initialized as empty.
In line \ref{tree-alg:v-process-iteration} we iterate over the current root node $e_t$ and current sensor placement $\mc{M}$.
In line \ref{tree-alg:construct-area-network} we construct the current area network $A$.
We then evaluate $P_{E}^{\max} (A)$ in line \ref{tree-alg:eval-mmdp}; 
If $P_{E}^{\max} (A) > P^{\text{target}}$ we perform a placement action {\bf line}-{\bf action} or {\bf tree}-{\bf action} depending on the number of child nodes of $v_t$ (downstream of $e_t$).
Each sub function is described in more detail as follows.

\subsubsection{{\bf construct}-{\bf area}}
For each iteration, the temporary node $e_t$ is visited and the area network $A$ is constructed with $e_t$ and the previous solution $\mc{M}^{g}$.
{\color{black} The current edge $e_t$ is the temporary root sensors of the area $s_t$.
The terminal sensors of the area are any sensors in $\mc{M}^{g}$ that are children of $s_t$.
Note that this may be empty at the start of the algorithm.
}
\subsubsection{{\bf line}-{\bf action}}
Given that our current subproblem satisfies $P_{E}^{\max}(A) < P^{\text{target}}$, if the next subproblem does not satisfy the condition, our only option is place a sensor at {\bf child} $(v_t)$.
So $\mc{M}^{g} \leftarrow \mc{M}^{g} \cup e_{t}$.
\subsubsection{{\bf tree}-{\bf action}}
\label{subsubsection-tree-action}
\begin{figure}     
\centering
\includegraphics[width=0.5\textwidth, height=0.23\textwidth]{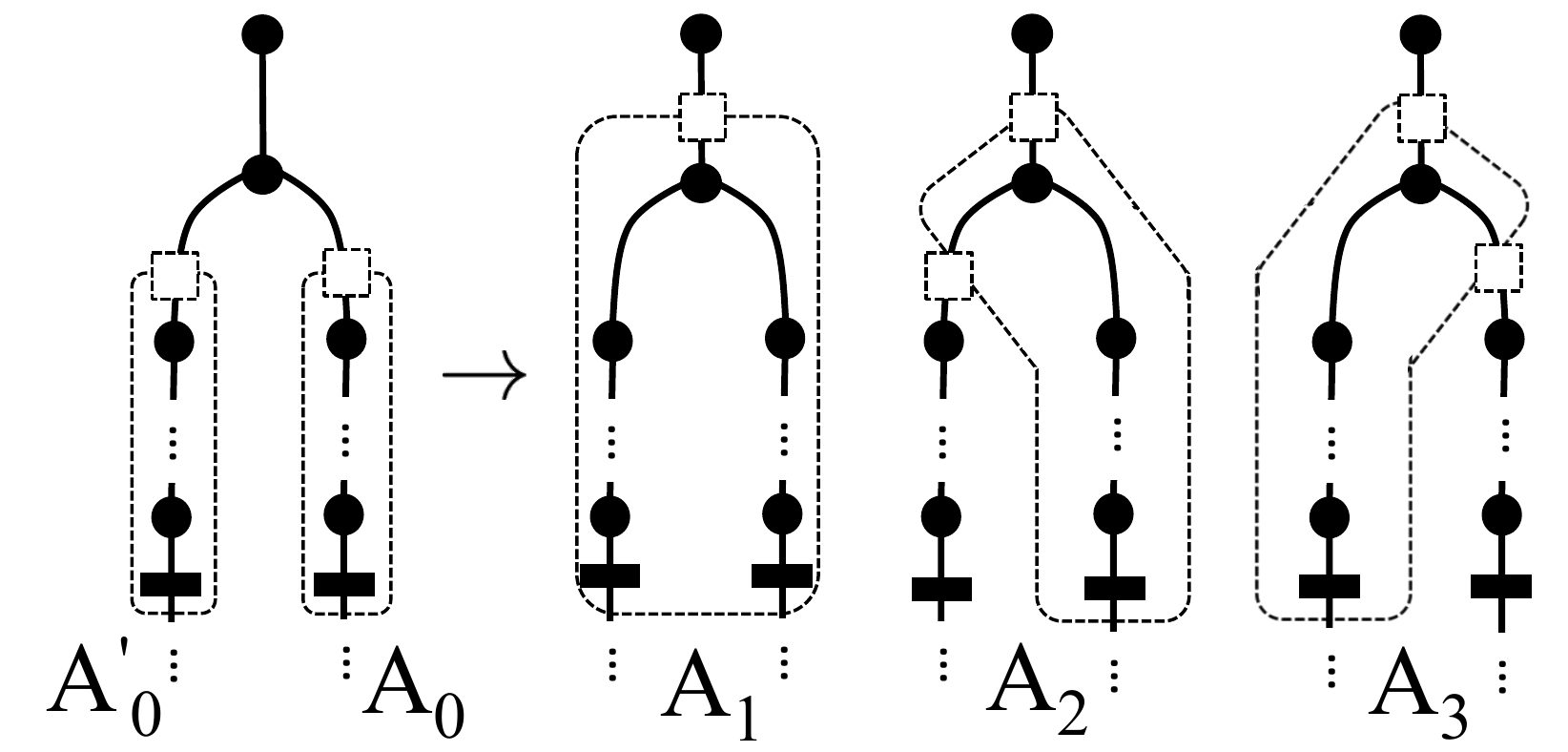}  
\caption{({\bf greedy}-{\bf tree}-{\bf action})
Correct node traversal in the tree network assumes that $P^{\max}_{E}(A_0) < P^{\text{target}}$ and  $P^{\max}_{E} (A^{\prime}_0) < P^{\text{target}}$.
If $P^{\max}_{E} (A_1) < P^{\text{target}}$ we do nothing.
Else, we generate and evaluate the error one of the $2^{| \bf{child}(v_t)|-1}$ area networks that can be constructed for example $A^{\prime}_1$ and $A^{\prime\prime}_1$.
}
\label{fig:tree_action}
\end{figure}
Given that the algorithm up to now has placed sensors on the two disjoint trees with roots with $v_1$ and $v_2$
We must move as far up to the root as possible before we are forced to place a measurement.
This leads to two different types of actions: a greedy strategy that is easy to implement and the optimal strategy.
The greedy strategy is implemented in practice and is almost always equal to the optimal strategy, which is discussed in Appendix \ref{appendix-greedy-optimal-tree-discussion}. 

The greedy strategy chooses the area network with the smallest $P_{E}^{\max} (A)$.
For example, in Figure \ref{fig:tree_action}, first assume $P_{E}^{\max} (A_0)$ and $P_{E}^{\max} (A^{\prime}_0)  < P^{\text{target}}$ and $P_{E}^{\max}(A_1) > P^{\text{target}}$.
In moving $e_t$ closer to the root, we must choose either $A_2$ or $A_3$ based on the placement which has the smallest error.
\subsection{Evaluating $P^{\max}_{E}(A)$}
\label{subsection-Evaluating-P_MAX-for-tree-networks}
{\color{black} 
The objective $P_{E}^{\max}(A_i)$ is computed as follows.
For each local hypothesis $H \in \mc{H}^{+}(A_i, f)$, the conditional distribution $\Delta s |~\hat{\mb{x}}~H \sim N(\mu(\hat{x}, H)$, $\Sigma(\hat{x}, H))$ is computed.

To compute this, we introduce the following $W^{\mu}(k) = \sum_{v \in \textbf{desc}(v)} \hat{x}(v)$ which computes the cumulative load forecast of all descendent vertices.
Similarly $W^{\sigma}(v) = \sum_{v \in \textbf{desc}(v)} \sigma^2(v)$ for the forecast variances.
For a given area, we evaluate the $\Delta s_i$ under hypothesis $H$, assuming the root $s_i$ and child sensor locations $\textbf{child}(s_i)$ for a particular observed flow.
\begin{align}
\mu(\hat{\mb{x}}, H) = W^{\mu}(s_i) - \sum_{ e \in V(H) } W^{\mu}(e) - \sum_{ e \in \textbf{child}(s_i) } W^{\mu}(e) 
\end{align}
Note that this is computed for a particular binary flow pattern.
Therefore, no outage edge upstream from the child sensors that we consider.
Likewise $\sigma(\hat{\mb{x}}, H)$ is computed via $W^{\sigma}$.

Finally, given the scalar distribution for $\Delta s_i | \{\hat{\mb{x}}~H \}$, the probability of missed detection for a scalar maximum likelihood detector can be computed.
}
\subsection{Optimality and Complexity}  
\label{subsection-Optimality-and-Complexity-of-Tree-Algorithm}  
This discussion concerns only {\bf optimal}-{\bf tree}-{\bf action} since it guarantees optimality, although the greedy strategy output is almost always identical.
\begin{thm}
\label{tree-network-optimality}
The bottom up placement solution $\mc{M}^{g}$ relying on {\bf optimal}-{\bf tree}-{\bf action} traversal solves OPT-3.
\end{thm}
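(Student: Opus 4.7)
The plan is to establish feasibility and minimality separately, then combine them. Feasibility is immediate by construction: whenever the current temporary area $A$ with root $e_t$ would have $P_E^{\max}(A) > P^{\text{target}}$, the algorithm commits a sensor (via either \textbf{line}-\textbf{action} or \textbf{optimal}-\textbf{tree}-\textbf{action}) before advancing $e_t$ past the violating configuration. Since the invariant ``every committed area satisfies $P_E^{\max}(A) \le P^{\text{target}}$'' is preserved at every step, the returned $\mc{M}^{g}$ is feasible for OPT-3. The nontrivial content of the theorem is minimality: no feasible placement uses fewer sensors than $|\mc{M}^g|$.

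For minimality I would argue by induction on the number of edges of $\mc{T}$, processed in the reverse order of $E_{\text{process}}$. In the \emph{line} case (a chain of edges), the argument is a standard greedy exchange: consider any optimal feasible $\mc{M}^{\star}$ and compare its lowest sensor $e_{\star}$ with the lowest sensor $e_{g}$ placed by the greedy algorithm. The greedy rule places $e_{g}$ at the highest edge such that the area below it still satisfies $P_E^{\max} \le P^{\text{target}}$; hence $e_{\star}$ cannot be strictly above $e_{g}$ without violating feasibility. Replacing $e_{\star}$ by $e_{g}$ in $\mc{M}^{\star}$ only enlarges the next area upward, but because the objective decouples across areas (Section \ref{subsection-Decoupled-Outage-Detector:}) and feasibility of the lower area is preserved, this exchange never increases $|\mc{M}^{\star}|$ and never destroys feasibility. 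Iterating the exchange transforms $\mc{M}^{\star}$ into $\mc{M}^g$, proving $|\mc{M}^g| \le |\mc{M}^{\star}|$.

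For a branching node $v_t$ with $|\textbf{child}(v_t)| > 1$, the greedy step chooses among the $2^{|\textbf{child}(v_t)|-1}$ ways to merge the previously committed child-areas into the new temporary area. The key lemma I would prove is a \emph{monotone substitution} property: if an optimal solution $\mc{M}^{\star}$ restricted to the subtrees below $v_t$ differs from $\mc{M}^g$ restricted there, then one can perform a sequence of upward exchanges whose aggregate effect is to convert $\mc{M}^{\star}$'s choice at $v_t$ into the one chosen by \textbf{optimal}-\textbf{tree}-\textbf{action}. This uses two facts: (i) $P_E^{\max}$ for an area depends only on the loads, forecast variances, and the edges it spans, and (ii) merging or splitting areas at a branching node preserves feasibility whenever the resulting areas' maximum errors individually stay below $P^{\text{target}}$. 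By \textbf{optimal}-\textbf{tree}-\textbf{action}'s definition, it already searches over all such partitions and picks the one that uses the fewest sensors while remaining feasible, so no alternative can strictly beat it.

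The hard part will be the branching step. In the line case the greedy rule is unambiguous, but at a branching node one must rule out the possibility that a globally better configuration commits an ``extra'' sensor low in one subtree in order to save two sensors higher up. My plan here is to show that because the area error $P_E^{\max}(A)$ is monotone under area enlargement (adding more edges/loads can only worsen the worst-case missed detection), any such trade is dominated by the configuration enumerated by \textbf{optimal}-\textbf{tree}-\textbf{action}. Once this monotonicity is made precise (using the explicit formula for $\mu(\hat{\mb{x}},H)$ and $\sigma(\hat{\mb{x}},H)$ in Section \ref{subsection-Evaluating-P_MAX-for-tree-networks}), the induction closes and $\mc{M}^g$ solves the feasibility problem OPT-3 with the minimum $|\mc{M}|$, which via bisection on $P^{\text{target}}$ yields the solution to OPT-2.
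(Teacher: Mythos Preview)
Your approach is essentially the same as the paper's: both hinge on the monotonicity of $P_E^{\max}(A)$ under area enlargement (your ``monotone substitution'' property is the paper's Lemma~\ref{lemma-MMDP-non-decreasing}, which states that $A\subset A'$ implies $P_E^{\max}(A)\le P_E^{\max}(A')$). Once that lemma is in hand, the paper dispatches minimality in three lines by contradiction (fewer sensors $\Rightarrow$ some area strictly contains a greedy area $\Rightarrow$ constraint violated), whereas you spell out an exchange argument; these are two presentations of the same idea, and your version is arguably the more careful one, since the paper's pigeonhole step ``some area must increase in size'' is stated rather than proved.

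The one place where you should adjust expectations is the proof of monotonicity itself. You propose to extract it from the explicit formulas for $\mu(\hat{\mb{x}},H)$ and $\sigma(\hat{\mb{x}},H)$, but the paper's derivation is more delicate than that suggests. It splits into two cases (enlarging downward by moving a terminal sensor, versus enlarging upward by moving the root sensor), uses the characterization $R_{\mc H}(H_i)=\bigcap_j r_{i,j}$ of the ML acceptance region (Lemma~\ref{prop-intersection-definition}), and---crucially---relies on Conjecture~\ref{conjecture-ML-error-monotonic-variance}: that for a scalar Gaussian ML problem with hypotheses $N(\mu_k,\sigma_k^2+\Delta)$, each missed-detection probability is monotone in the common variance shift $\Delta$. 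The paper does \emph{not} prove this conjecture analytically; it is only checked over $2\times 10^5$ random instances. So the monotonicity you correctly identify as the linchpin is, in the paper, established only modulo a numerically verified conjecture, and your plan to prove it ``using the explicit formula'' will run into the same obstacle.
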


Complexity of evaluating the detector and the objective as well as the placement algorithm is discussed in Appendix \ref{subsection-complexity-analysis}.
The main results are summarized as follows:
\begin{itemize}
\item The worst case complexity of evaluating the detector for an area and it's missed detection $P^{max}_{E}(A)$ for an area of size $|E|$ is $O(4^{|E|})$.
{\color{black} Evaluating only outages of size $k$ is $O(|E|^{k})$.}
\item Given a fixed size to evaluating $P^{max}_{E}(A)$ the greedy placement algorithm is of $O(|E|)$ complexity while the optimal strategy is of $O(4^{\log |E|})$ complexity.
\end{itemize}

{\color{black}    
The detector and placement complexity in the worst case is quite poor.
However in any practice, this cost is averted using a detector fixed hypothesis size.
The outage model was of each edge having some finite prior likelihood of outage.
Therefore a multi-edge outage of large size is much less likely.
For this reason, all k outages do not need to be enumerated.
In practical a single edge outage per area may be sufficient.
Using a single outage detector with greedy placement is extremely efficient (only $O(|E|^2)$ complexity).
}

\section{Distribution System Case Study}
\label{subsection-Distribution-System-Case-Study}
\textit{For the remaining case studies only single area outages are considered.}

\subsection{PNNL Case Study}
\label{subsection-PNNL-Case-Study}

\begin{figure}[h]
\hspace{-5mm} 
\includegraphics[width=0.5\textwidth, height=0.4\textwidth]{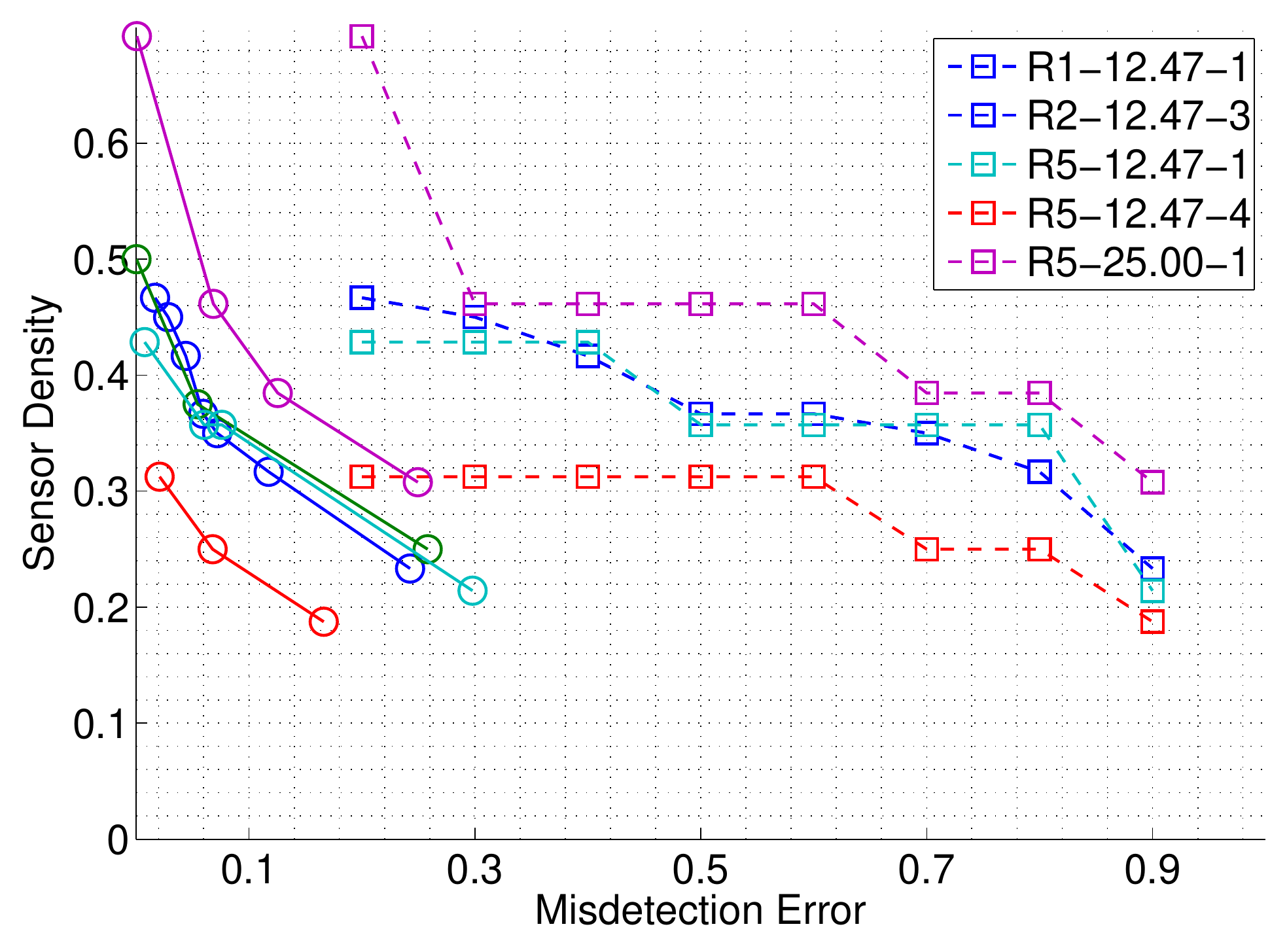}
\caption{
Outage detection performance for selected PNNL feeders.  
Square marker denotes specified error target for optimization.  
Circular marker denotes empirical mean missed detection error for hypothesis set.
}
\label{fig:pnnl_case_study}
\end{figure}
We perform outage detection using a subset of the Pacific Northwest National Laboratory test feeders \cite{PNNL2008}.
Table \ref{tab:pnnl_chosen_feeders} gives overview of the feeders chosen for the simulation study.
The primary applications of the feeders, are heavy to light urban networks, as well as suburban and rural networks.
The climate zones refer to (1) temperate (2) hot/arid (3) cold (4) hot/cold (5) hot/humid according to \cite{PNNL2008}.

\begin{table}[h]
\centering
\caption{PNNL Test feeders used in case study}
\begin{tabular}{@{}ccccccc@{}}
\toprule
Network       &  Voltage & Climate Zone  &  Type & Size      \\
\cmidrule{1-5} 
R1-12.47-1  &  12.5 kV      &   1   &   suburban      &  613    \\ 
R2-12.47-3  &  12.47 kV    &   2   &   urban            &   52     \\
R5-12.47-1  &  13.8 kV      &   5   &   urban            &  265    \\
R5-12.47-4  &  12.47 kV    &   5   &   commercial   &  643    \\
R5-25.00-1  &  22.9 kV      &   5   &   suburban      &  946    \\
\bottomrule 
\label{tab:pnnl_chosen_feeders}
\end{tabular}  
\end{table}

\subsubsection{Outage Model}
For the PNNL feeders, outages are simulated by \text{fuses} and \textit{switches} disconnected the downstream loads from the substation feeders. 
For each network, all fuses and switches are reduced to edges in the general tree network.
The set of loads which are disconnected by a fuse or switch disconnecting are lumped to aggregate loads.
For these loads, the mean load of each group of fuses can vary. 

\subsubsection{Forecast Error Model}
{\color{black} 
In \cite{Sevlian2014C} the authors present a rule of thumb model for day ahead load forecasting at various aggregation levels based on smart meter data.
The day ahead forecast coefficient of variation, $\kappa = \sigma/\mu$ is shown to be dependent on the mean load of the group.
Many studies make simplified assumptions on the relative forecast error.
However, at the level of small aggregates, the forecast $\kappa$ can vary greatly on the size of the aggregate and must be taken into account.
A Reasonable fit shown in \cite{Sevlian2014C} is $\kappa(W) =  \sqrt{ \frac{3562 }{W} + 41.9}$.
This formula is used to show that each set of islanded loads will have a different value of $\kappa(W)$.
}

Figure \ref{fig:pnnl_case_study} shows the application of the sensor placement algorithm for each network.
Even though each networks represents a different applications, they show somewhat similar performance in terms of placement density. 
Averaged over each of the network configurations, attaining $10 \%$ mean missed detection error is possible by having $30 \%$ sensor density.
Seen another way, we can reduce the realtime monitoring of each fuse by $70 \%$ by tolerating a small amount of error in the outage decision.

\subsection{General Line and Tree Network Sensitivity}
\label{subsection-Sensor-Placement-Sensitivity-Analysis}

\begin{figure}[h]
\hspace{-5mm}
\subfigure[][]{
	\label{fig:part1_p1_line}
	\includegraphics[width=0.23\textwidth, height=0.23\textwidth]{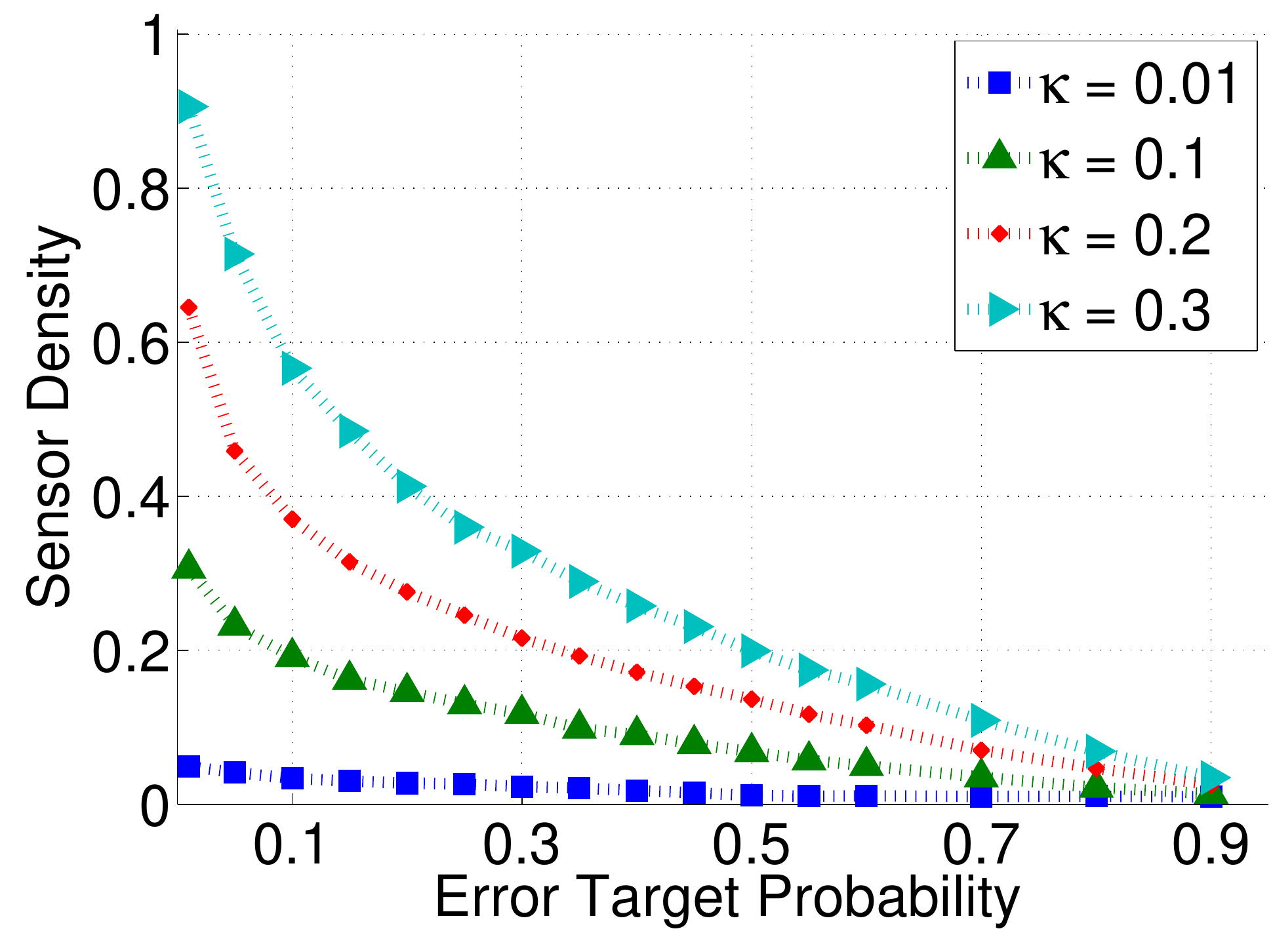}
}
\subfigure[][]{
	\label{fig:part1_p1_tree}
	\includegraphics[width=0.23\textwidth, height=0.23\textwidth]{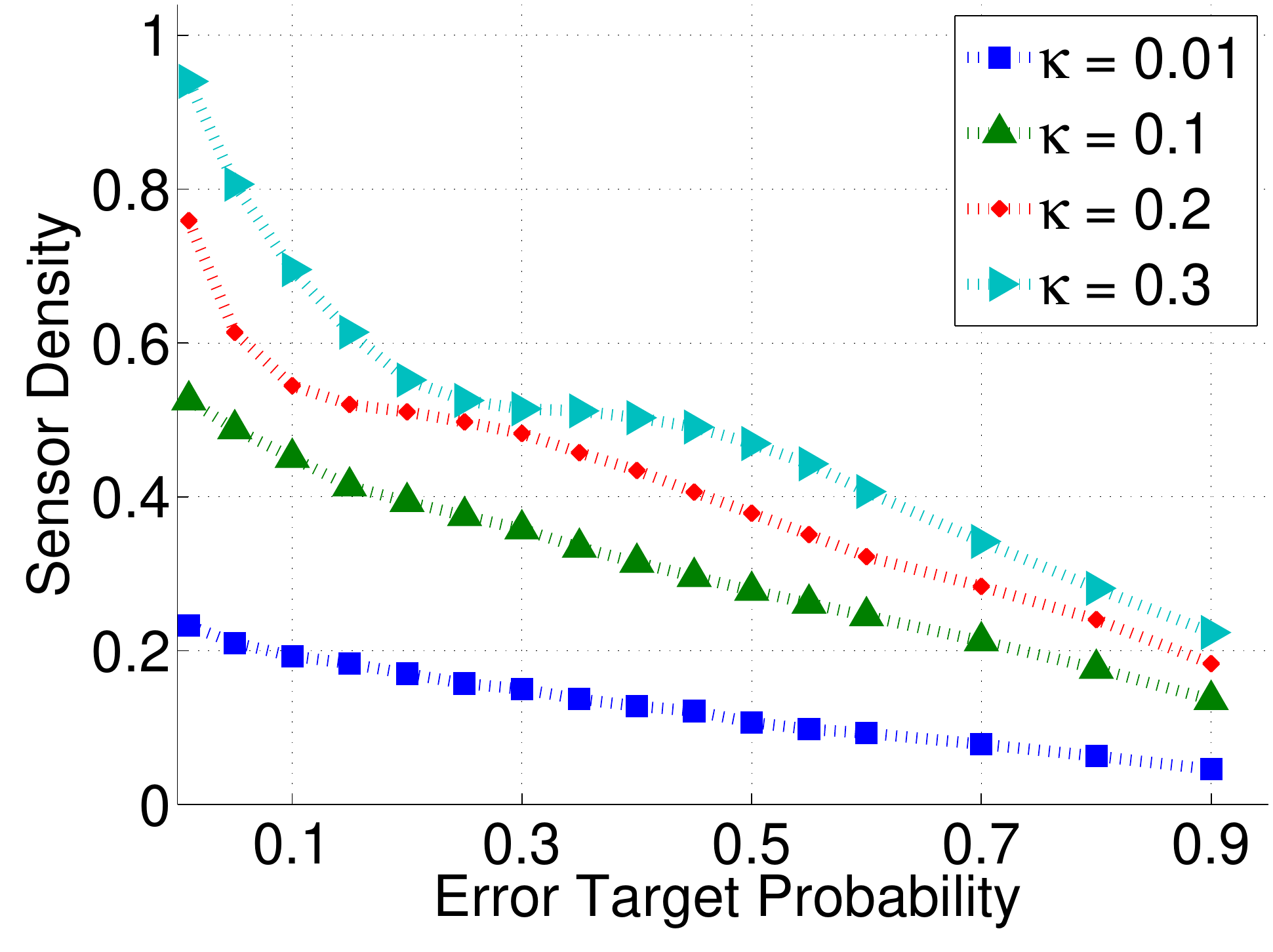}
	
}
\caption{ 
The effect of optimization error target $P^{\text{target}}$ and relative forecast error $\kappa$ on both line (\ref{fig:part1_p1_line}) and tree (\ref{fig:part1_p1_tree}) networks. 
}
\label{fig:part1_plots}
\end{figure}

Figure \ref{fig:part1_plots} shows the sensitivity of the line and tree networks under different simulation parameters.
Both networks are of length 100 nodes, the tree was generated using the method in \cite{Rodionov2003}.
In an ideal line network with extremely high forecast accuracy ($\kappa = 1 \%$), 1 or 2 sensors are required for extremely low missed detection errors.
This extreme situation does not occur in practice, but serves as a baseline for realistic networks.
From Figure \ref{fig:part1_p1_line} we see that the required sensor density decreases quite quickly vs. $P^{\text{target}}$.
The relation between sensor density and $P^{\text{target}}$ is smoothly decaying.
In comparison, randomly generated tree networks require on average $2-3$ times as many sensors to achieve the same error target.
  
\subsection{Missed Detection Error}
\label{subsection-Missed-Detection-Error}

Optimization \eqref{min-max-hyp-area} is meant to minimize the maximum missed detection error among all possible hypothesis.  
This is clearly can be too conservative of a requirement.
Therefore it is useful to understand the nature of the actual hypothesis missed detection values that arise from a given sensor placement.
\begin{figure}[h]
\hspace{-5mm}
\subfigure[][]{
	\label{fig:missed_detection_histogram}
	\includegraphics[width=0.24\textwidth, height=0.23\textwidth]{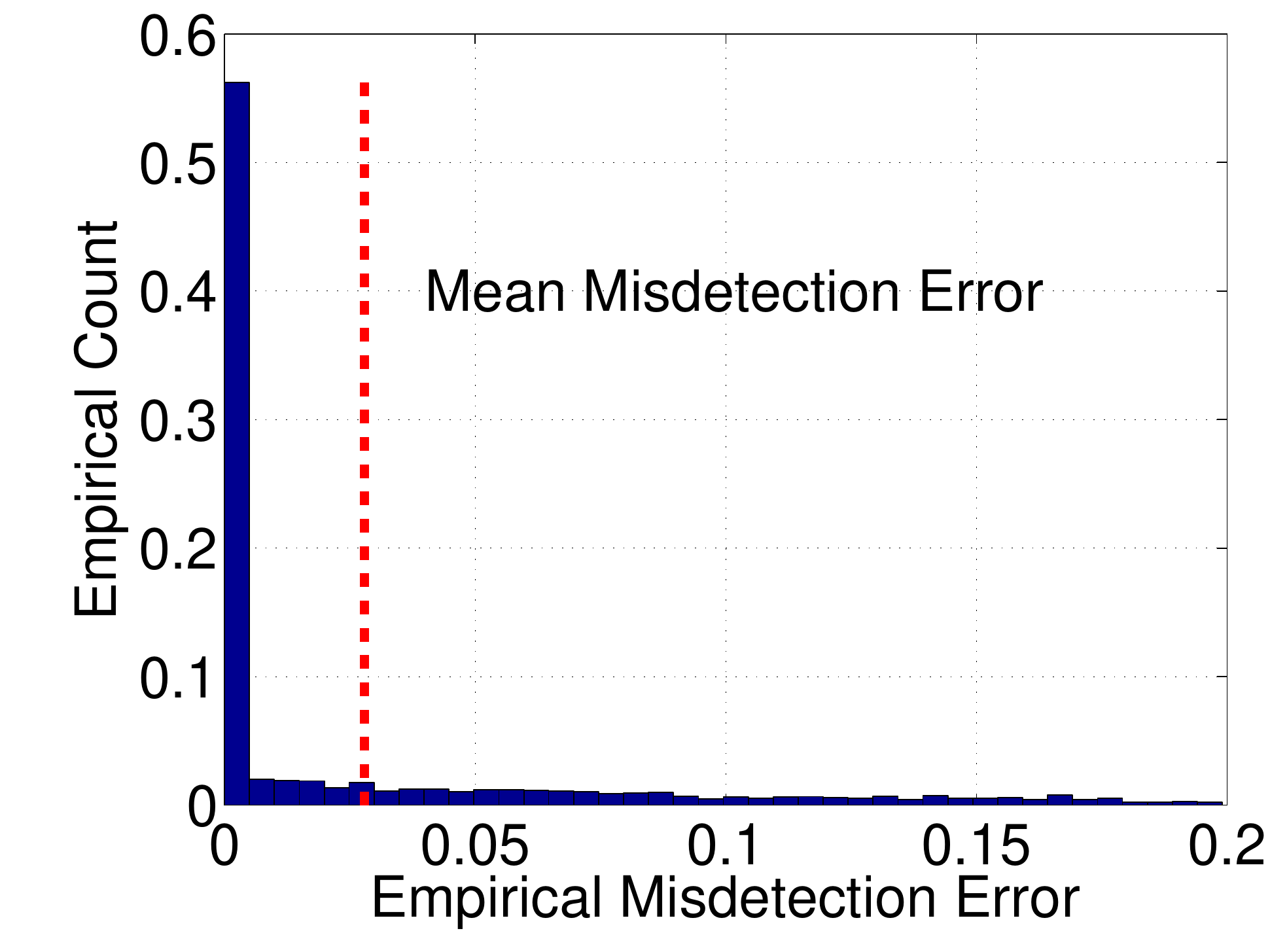}
}
\subfigure[][] {
	\label{fig:sd_vs_et_mean_max}
	\includegraphics[width=0.24\textwidth, height=0.22\textwidth]{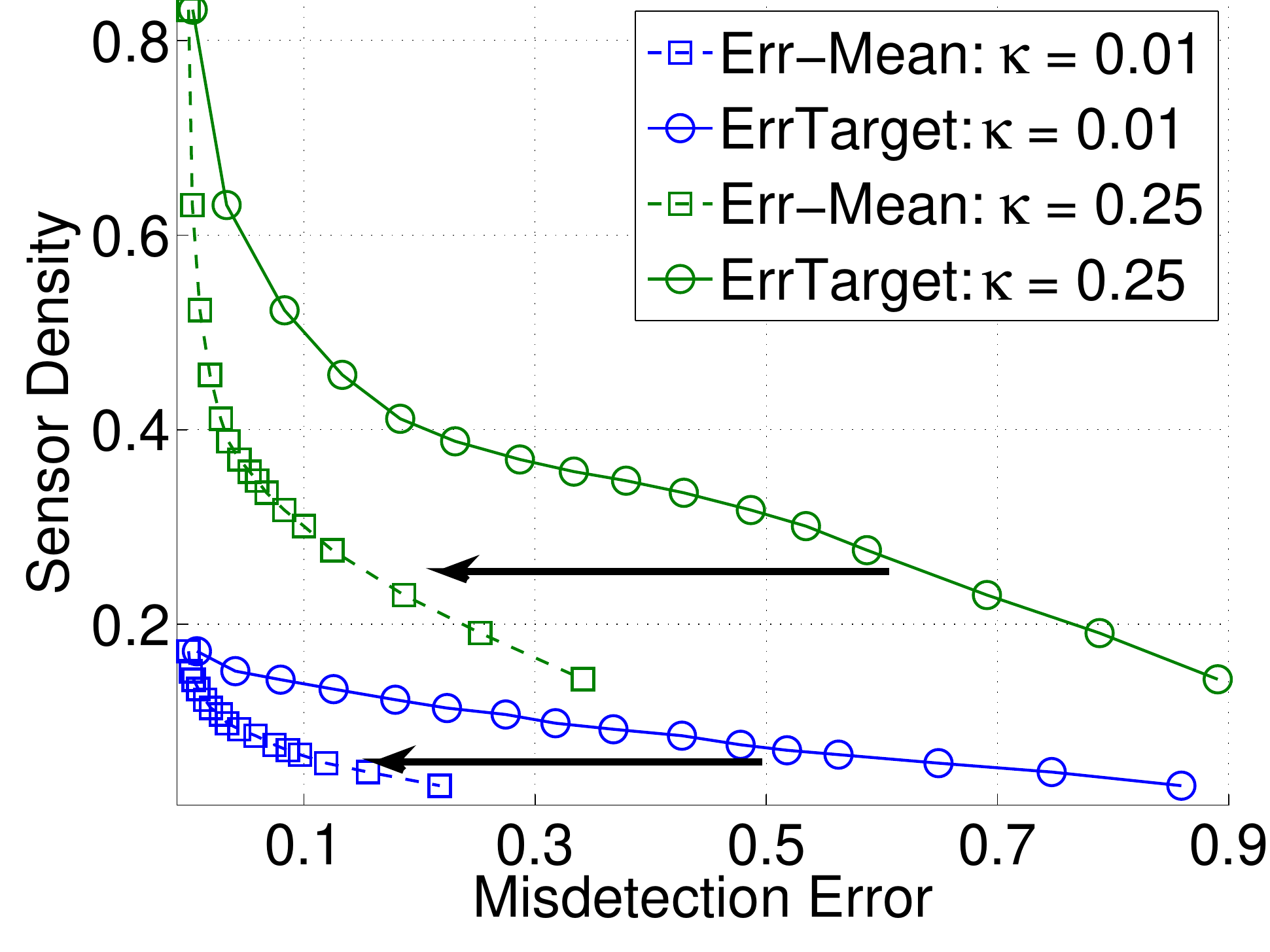}
}
\caption{Hypothesis missed detection analysis: \ref{fig:missed_detection_histogram} error histogram for $P^{\text{target}} = 0.2$; \ref{fig:sd_vs_et_mean_max} tree network reduction to mean error.
}     
\label{fig:mean_error_plot_1}
\end{figure}

Figure \ref{fig:missed_detection_histogram} shows the distribution of missed detection probabilities for a tree network.
Setting $P^{\text{target}} = 0.2$, and $\kappa = 0.3$ we record the value of each hypothesis error.
The empirical maximum error is close to the target 0.2.  
This makes sense because in successively solving the feasibility problem, we will expand the area network until the maximum error surpasses $P^{\text{target}}$.
However, we see that in fact almost all of the missed detection probabilities are less than the target.
For this example in particular $34 \%$ of the hypothesis are less than $1e^{-3}$ therefore essentially zero.

In comparing the mean and maximum errors for the range of achievable values of $\kappa$ and $P^{\text{target}}$.
The mean error is in on average $25 \%$ of the $P^{\text{target}}$.
The maximum error in the network and $P^{\text{target}}$ very closely, therefore the optimization yields a very tight result.

\section{Conclusion}
We propose an outage detection framework combining power flow measurements on edges of the distribution system along with consumption forecasts at nodes of the network.
We formulate the detection problem and provide an optimal placement for the maximum missed detection error metric.
Finally, relying on feeder information from the pacific northwest national labs as well as a forecast error scaling law derived from Pacific Gas and Electric smart meter data, we demonstrate our formulation.
\label{section-Conclusion}    
\bibliographystyle{plain}  
\bibliography{MAIN_BIB}  
\newpage
\appendix
\subsection{Nomenclature Table}
\addcontentsline{toc}{}{Nomenclature}
\begin{IEEEdescription}[\IEEEusemathlabelsep\IEEEsetlabelwidth{$V_1,V_2,V_3, V_4$}]
\item[$\mathcal{T}$]                                            Tree network representation of a distribution feeder
\item[$(V, E)$]                                                     Vertex and edge set of tree $\mathcal{T}$
\item[$V(H)$]                                                       Set of vertices that are connected to root under outage hypothesis $H$.
{\color{black}
\item[$\textbf{desc}(v)$]                                      Descendants of vertex $v$.
\item[$\textbf{child}(v)$]                                       Children of vertex $v$.
}
\item[$x_n$, $\mb{x}$]                                         Scalar load and forecast value at vertex $v$
\item[$\epsilon_n$, $\mb{\epsilon}$]                    Forecast residual for load $l(v)$
\item[$\sigma_{n}^2$, $\Sigma$]                         Forecast residual variance and covariance matrix.
\item[$\mc{H}^{1}$]                                              Single outage hypothesis and element and set.
\item[$\mc{H}^{k}$]                                              k-outage hypothesis. 
\item[$\mc{H}_{u}$]                                              Unique hypothesis where no edges are downstream of any others.
\item[$A$, $\mc{A}$, $\mc{A}^{+}$]                     Local area network $A \in \mc{A}$; Pruned areas under binary flow processing.	        
\item[$\mc{H}^{+}_{u}(A_k, f)$]        			Set of local area hypotheses post binary flow processing.
{\color{black}
\item[$\mc{H}_{k, i(k)}$]           			        $i(k)^{th}$ hypothesis in area $A_k$.  Used to reconstruct global hypothesis $H_i$.
}
\item[$\mc{M}$]                                                  Sensor placement $(\mathcal{M} \subset E)$
\item[$\mathbf{s}$]                                             Set of observations on edges of network
\item[$r_{ij}$]          					        Acceptance region for pairwise test of hypotheses: $H_i$ and $H_j$
\item[$R_{\mc{H}}(H)$] 		                          Acceptance region of hypothesis $H$ over all alternatives in $\mc{H}$.
\item[$P_{E, C}(H, \mc{M})$, ]                            Probability of error (E) and correct detection (C) for hypothesis $H$, under placement $\mc{M}$.
\item[$P_{E}^{max}(A)$]                                     Maximum probability of incorrect detection over all hypothesis error in area A.
\item[$P_{C}^{min}(A)$]                                      Minimum probability of correct detection over all hypothesis error in area A. 
\end{IEEEdescription}

\subsection{MAP Detection for Outage Hyptheses}
\label{subsection-MAP-Detection-for-Outage-Hyptheses}
Here we show how the general MAP detector rule can be evaluated for where we combine edge flows $\mb{s}$, load forecasts $\hat{\mb{x}}$ and candidate outages $H$.
\begin{align}
\hat{H} &= \underset{ H \in \mc{H}^{k} }{\arg\max} \Pr\left( H~|~\mb{s},~\mb{\hat{x}} \right)                                                                                                         \label{apx-map-l1} \\
 		   &=  \underset{ H \in \mc{H}^{k}  }{\arg\max} \frac{ \Pr\left( \mb{s},~\mb{\hat{x}} ~ | ~ H \right) \Pr\left( H \right) }{ \Pr\left( \mb{s},~\mb{\hat{x}} \right)  } \label{apx-map-l2} \\
    		   &=  \underset{ H \in \mc{H}^{k} }{\arg\max} \Pr\left( \mb{s},~\mb{\hat{x}} ~ | ~ H \right) \Pr\left( H\right)                                                                           \label{apx-map-l3} \\
	           &= \underset{  H \in \mc{H}^{k} }{\arg\max} \Pr\left( \mb{s}~ | ~\mb{\hat{x}},~ H\right)  \Pr\left( \mb{\hat{x}}~|~ H \right) \Pr\left(H \right)                         \label{apx-map-l4} \\ 
	           &= \underset{  H \in \mc{H}^{k} }{\arg\max} \Pr\left( \mb{s}~ | ~\mb{\hat{x}},~ H \right)  \Pr\left( \mb{\hat{x}} \right) \Pr\left( H \right)                                \label{apx-map-l5} \\ 
                    & = \underset{  H \in \mc{H}^{k} }{\arg\max} \Pr\left( \mb{s}~ |~\mb{\hat{x}},~H \right)  \Pr\left( H \right)                                                                           \label{apx-map-l6} \\
                    & = \underset{  H \in \mc{H}^{k}  }{\arg\max} \Pr\left( \mb{s}~ |~\mb{\hat{x}},~H \right)                                                                                                     \label{apx-map-l7}    
\end{align}

Lines \eqref{apx-map-l1} - \eqref{apx-map-l3} convert the MAP detector to a likelihood detector with prior weights.
Line \eqref{apx-map-l4} conditions on the load forecast $\mb{\hat{x}}$.
Since $\mb{\hat{x}}$ does not depend on the outage hypothesis, (only $\mb{s}$ does), the term can be removed leading to \eqref{apx-map-l6}.
In \eqref{apx-map-l7}, we assume a uniform prior over all hypotheses, however this does not have to be the case.

Given the assumption of each edge going into outage with some fixed prior probability, a single edge outage hypothesis should have $\Pr\left( H \right)  = \rho$, while a k outage condition should have a prior of $\Pr\left( H \right)  = \rho^k$.
This motivates enumerating fewer outage hypotheses when evaluating $\mc{H}_u$ in practice.

\subsection{Extended Discussion of Recursive Evaluation of $\mc{H}_u$}
\label{subsection-Extended-Discussion-of-Recursive-Evaluation-of-H_u}
\begin{figure}[h]
\centering
\subfigure[][]{
\includegraphics[scale=0.25]{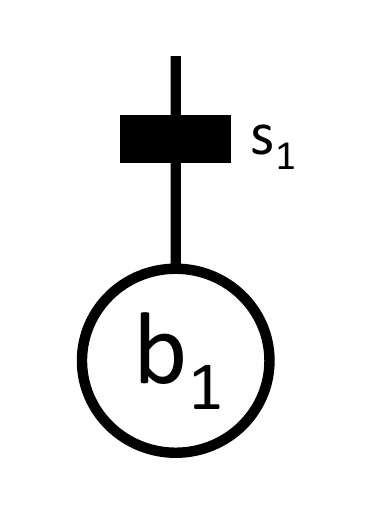}  
\label{fig:Hu_example_1}
}
\subfigure[][]{
\includegraphics[scale=0.25]{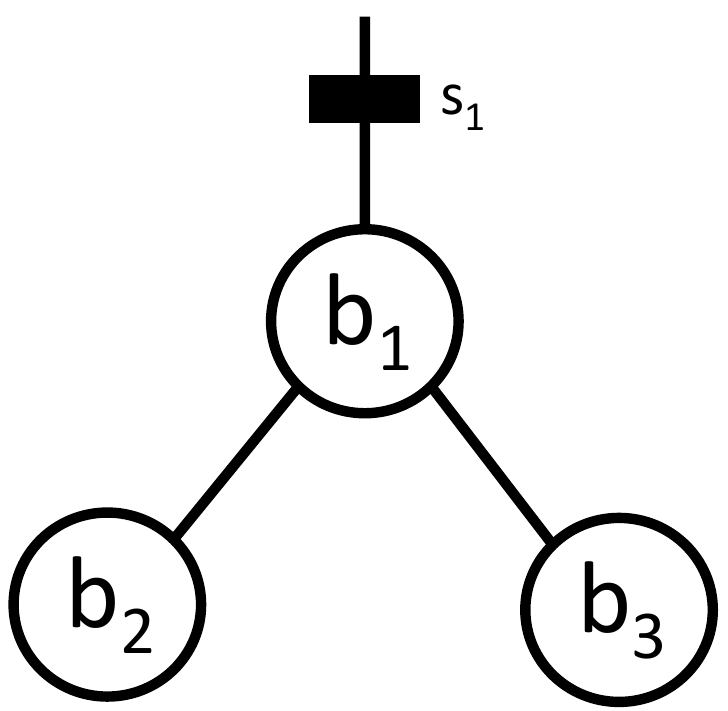}  
\label{fig:Hu_example_2}
}
\subfigure[][]{
\includegraphics[scale=0.25]{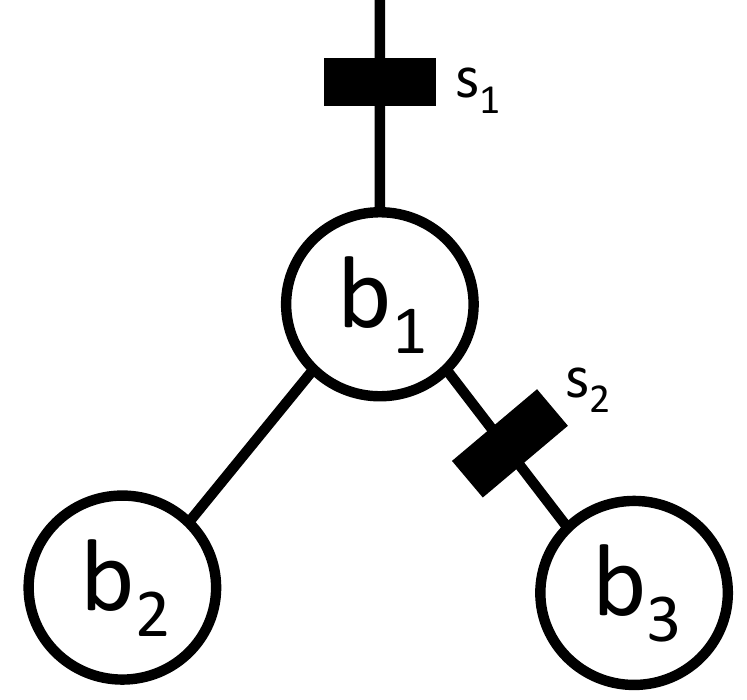}  
\label{fig:Hu_example_3}
}
\caption{   
\ref{fig:Hu_example_1} Worked example 1.
\ref{fig:Hu_example_2} Worked example 2.
\ref{fig:Hu_example_3} Worked example 3.
}
\end{figure}

We present here additional worked out examples, that show how the recursive definition can enumerate all hypotheses and focus on some corner cases that must be defined.
Recall the recursive definition:
\begin{align*}
\mc{H}_u(b) = E(b) \cup \left(  \bigcup_{ \mb{b} \in \mbb{P}( \textbf{child}(b) ) }   \left(  \underset{b \in \mb{b}} { \bigwedge } \mc{H}_{u}(b)  \right)  \right) 
\end{align*}

\subsubsection{Example 1 (Figure \ref{fig:Hu_example_1})}
This is the simplest case to evaluate and is:
\begin{align*}
\mc{H}_u &= E(b_1) \cup \emptyset
\end{align*}
The null hypothesis set arises from evaluating $\mbb{P}( d(b_1) ) = \emptyset$, since $b_1$ has no children.

\subsubsection{Example 2 (Figure \ref{fig:Hu_example_2})}
This is the simplest case to evaluate and is:
\begin{align*}
\mc{H}_u &= \{ E(b_1) \cup \emptyset \cup \mc{H}_u(b_2) \cup \mc{H}_u(b_3) \cup \mc{H}_u(b_2) \times \mc{H}_u(b_3) \} \\ 
	        &= \{ E(b_1) \cup \emptyset \cup  \{ \emptyset \cup E(b_2) \} \cup \{ \emptyset \cup E(b_3) \} \cup \\
	        &~~~~~~~ \{ \emptyset \cup E(b_2) \} \times \{ \emptyset \cup E(b_3) \} \} \\ 
	        &=  \{ \emptyset \cup E(b_1) \cup E(b_2) \cup E(b_3)  \cup \left( E(b_2) \times E(b_3) \right) \} 
\end{align*}
This example is reduced to it's final form in eq. \eqref{eq:HuB3}, in Section \ref{subsection-unique-outages}.
However, the following equalities are omitted in the enumeration:
\begin{align}
\emptyset \cup \emptyset  &= \emptyset \\
\emptyset \times \emptyset &= \emptyset \\ 
\emptyset \times e_i          &=  e_i 
\end{align}
The final relation leads to $\emptyset \times E(b) = E(b)$.   

\subsubsection{Example 3 (Figure \ref{fig:Hu_example_3})}
Consider the two binary flow indicators $\mbb{I}_{ \{ s_1, s_2 > 0 \} }=\{1~1\}$ and $\{1~0\}$.
In the first case, we have the following product set:
\begin{align}
\mc{H}^{+}(\{1~1\}) &= \{ E(b_2) \cup \emptyset_1 \} \times \{ E(b_3) \cup \emptyset_2 \}  \nonumber \\
	          &= \{ E(b_2) \times E(b_3) \cup \emptyset_1 \times E(b_3) \cup E(b_2) \times \emptyset_2 \cup \emptyset_2  \times \emptyset_2 \}  \nonumber \\ 
	          &= \{ E(b_2) \times E(b_3) \cup E(b_2) \cup E(b_3)  \cup \emptyset \}  \nonumber
\end{align}

We use the fact that $\emptyset \times E(b) = E(b)$ and that the product $\emptyset_1 \times  \emptyset_2 = \emptyset$, which is the global null hypothesis from the naive enumeration in Example 2.
Similarly, enumerating the $\{1~0\}$ case, we have that: $\mc{H}^{+}(\{1~0\}) = \{ E(b_1) \}$.
We see that splitting the hypotheses based on flow information, conserves the search space, since $\mc{H}_u^{+}(\{1~1\})  \cup  \mc{H}_u^{+}(\{1~0\}) =  \mc{H}_u$.

\subsubsection{Tree $\mc{T}_1$}

For tree $\mc{T}_1$, we have:
\begin{align*}
\mc{H}_{u}(b_1) &= \{ \emptyset \cup E(b_1) \cup E(b_2) \cup \mc{H}_{u}(b_3) \cup E(b_2) \times \mc{H}_{u}(b_3)  \} \\ 
\mc{H}_{u}(b_3) &=  \{ \emptyset \cup E(b_3) \cup E(b_4) \cup  E(b_5)  \cup \{ E(b_4) \times E(b_5) \} \}
\end{align*}

\subsection{General Hypothesis Decoupling}
\label{subsubsection-general-hypothesis-decoupling}

These two cases provide the intuition for a general procedure which is as follows:
Given binary information from flows, all areas $A_k$ with rooted sensor with $s_k = 0$, are discarded in generating a local hypothesis.
Each node in the branch graph is assigned a label, $l \in L$  $L = \{P, Z, U \}$ for ($P$) positive, ($Z$) zero, and ($U$) undetermined branches.
These are defined as follows:
\begin{enumerate}
\item [] \textbf{Positive Branch}: Branch is upstream from a sensor measuring positive flow, therefore can never be evaluated in any outage hypothesis.
Also, it's immediate parent branch cannot be enumerated either.
\item [] \textbf{Zero Branch}: This branch is directly upstream from a zero measurement therefore it's edges must always be enumerated in any outage hypothesis.
\item [] \textbf{Undetermined Branch}: This branch has no information, so is enumerated without any restriction.
\end{enumerate}

This definition leads to the following procedure to enumerate $\mc{H}^{+}(A_i, \mbb{I}_{\{\textbf{child}(s_i) > 0\} } )$. 
First each branch-node is labeled with the following procedure:
\begin{enumerate}
\item [] \textbf{Initialization} Branch with descendent sensor (1) $s > 0$ assigned label $P$ and (2) $s = 0$ assigned label $Z$, and (3) no descendants assigned label $U$.
\item [] \textbf{Update} Given a current branch node $b$ and the set of children, the node is assigned as follows:
(1) If any descendent node is labelled $P$ then it must be labelled $P$.  (2) If descendants are $U$ and $Z$, then it must be labeled $U$.
\end{enumerate}

Once the the branch-nodes are labeled, enumerating $\mc{H}^{+}(A_i, \mbb{I}_{\{ \textbf{child}(s_i) > 0\} } )$ can be done recursively using the following rules:
\begin{enumerate}
\item [] \textbf{Positive  Rule} Never enumerate a branch $(E(b))$ with positive flow label ($P$).
\item [] \textbf{Zero Rule} When evaluating the recursive definition $\mc{H}(b)$ on an element of the power set.
If any descendent is labelled $Z$, only evaluate product set elements that contain this branch.
\end{enumerate}
   
\begin{figure}[h]
\centering
\subfigure[][]{
\includegraphics[scale=0.30]{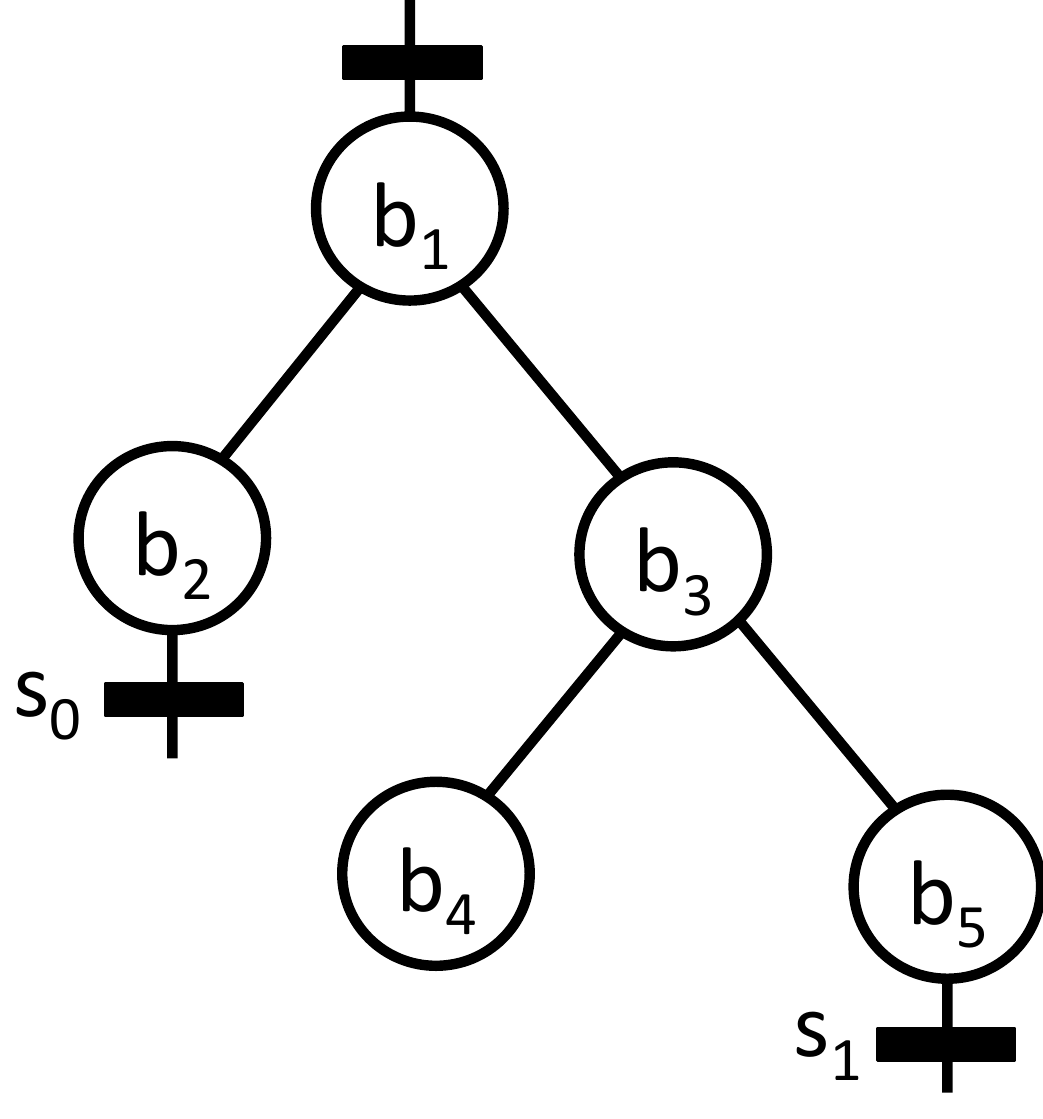}  
\label{fig:local_area_hypothesis_example}
}   
\subfigure[][]{
\includegraphics[scale=0.30]{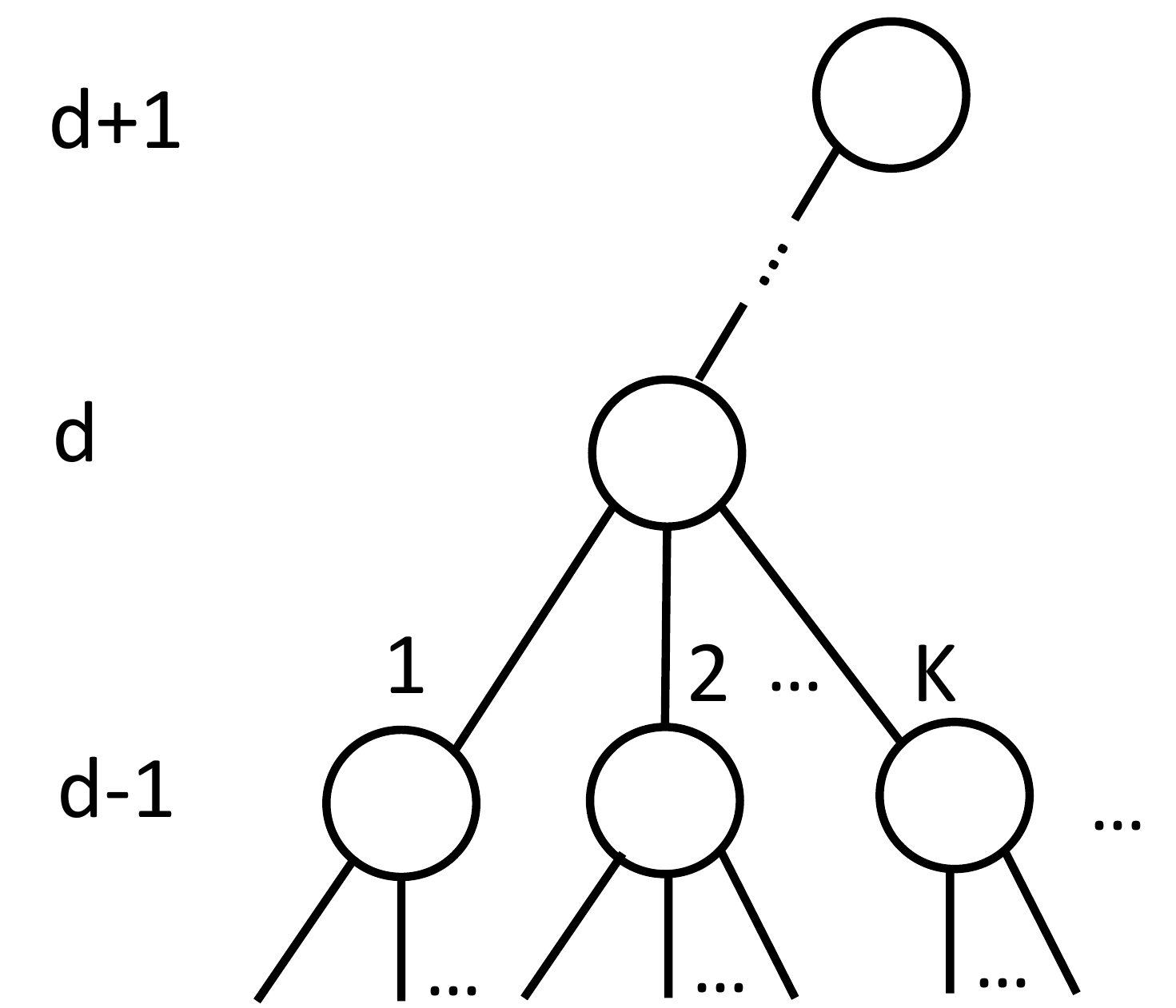}  
\label{fig:complexity-analysis-worst-case-tree}
}   
\caption{ 
\ref{fig:local_area_hypothesis_example} General network reduced to individual branches. 
 \ref{fig:local_area_hypothesis_example} Worst case tree network of depth $D$ and $K$ children for each vertex.
 } 
\end{figure}
\begin{table}[h]
\centering
\caption{ }
\begin{tabular}{@{}ccc@{}}
\toprule
Binary Flow $\mbb{I}_{ d(s_i) > 0\}}$    &  Branch Labels & Hypotheses Branches    \\
\cmidrule{1-3}  
$0, 0$ & $U$, $Z$, $U$, $U$, $Z$ & $b_1$, $b_2$, $b_2 \times b_3$,                         \\ 
           &                                          & $b_2 \times b_5$, $b_2 \times b_4 \times b_5$    \\ 
$0, 1$ & $P$, $Z$, $P$, $U$, $P$ &  $b_2 , b_2 \times b_4$                                         \\
$1, 0$ & $P$, $P$, $U$, $U$, $Z$ &  $b_3$, $b_5$, $b_4 \times b_5$                          \\
$1, 1$ & $P$, $P$, $P$, $U$, $P$ &  $b_4$                                                                    \\
\bottomrule 
\label{tab:general_local_hypothesis_solution}
\end{tabular}  
\end{table}   
An example local area is provided in Figure \ref{fig:local_area_hypothesis_example}.
The general method is applied under each of the binary flow cases, where the results are shown in Table \ref{tab:general_local_hypothesis_solution}.
The method is applied to each binary flow, and the branches to be enumerated are given. 

\subsection{Complexity Analysis}
\label{subsection-complexity-analysis}

In analyzing the complexity of various algorithms we assume the tree in Figure \ref{fig:complexity-analysis-worst-case-tree}.
Each vertex has $K$ children and is of depth $D$.
It can be shown that the number of edges is related to these quantities by $E = \frac{K^D - 1}{K-1}$.

\subsubsection{Evaluating $P^{max}_{E}(A)$}
\label{subsubsection-Evaluating-P-max-E-Area}

For simplicity, we focus on a binary tree, so $K=2$ and $|E| = 2^{D} - 1$.
The number of possible hypotheses at each depth $C(d)$ using this network is related recursively as the following:
\begin{align}
C(d+1) &= \sum_{n=1}^{K} { K \choose n } C^{n}(d+1) \\ 
            &= \left( C(d) + 1 \right)^{K} - 1
\end{align}
This can be derived directly from \eqref{eq-recursive-Hu}.

In the binary tree case, it's simple to show that $C(d) = 4^{2^d} - 1$. 
Therefore the number of hypotheses are double exponential in the depth of the tree. 
For the entire tree, this leads to the root node value of $C(D) = 4^{2^D} - 1$ which is $4^{(|E|+1)} - 1$.
Therefore $|\mc{H}_u| = O\left( 4^{|E|} \right)$, exponential in the size of the graph.

This cost may be averted due to the following:
\begin{itemize}
\item Fixed multi-hypothesis size.
The MAP detector requires a prior probability of hypotheses.  
Since multiple edge outages are less likely, they don't always have to be enumerated.
For example, considering only single edge outages leads to $|\mc{H}_u| = O\left( |E| \right)$ complexity for an area.
\item Small area sizes, and Binary Flow segmentation.
The number of hypotheses are exponential in $|E|$ for an area.
Given many sensors, this can divide the number of edges for an area considerably.
Additionally, the binary flow information from downstream sensors will on average divide each $\mc{H}^{+}(A)$ by a factor of $2^{|d(s)|}$.
\end{itemize}

The following analysis is in terms of the evaluation of $P^{max}_{E}(A)$ since we assume appropriate approximation of this function has been performed.  
\subsubsection{Evaluation of Algorithm \ref{tree-placement-algorithm} using greedy strategy} 
The greedy strategy will have to evaluate all $2^{K}$ subproblems at each vertex, and choose the minimum $P^{max}_{E}(A)$.
The worst case complexity is therefore $O(|E|~2^{K} )$, which reduces to $O(|E|)$, since $K$ is a constant.

\subsubsection{Evaluation of Algorithm \ref{tree-placement-algorithm} using optimal strategy} 
The optimal strategy will expand the problem size by a factor $2^{K}$ at each vertex.
The number of sub-problems to consider after a depth of D will be $(2^{K})^D$ which for a binary tree becomes $O(4^{\log |E|})$.
\subsection{Proxy Function Optimization}
\label{subsection-proxy-function-optimization}
The placement problem \ref{min-max-hyp-all} will output the optimal sensor locations and minimizing maximum missed detection error $\alpha^{\star}(\mc{M}^{\star})$, as:
\begin{align*}
\alpha^{\star}(\mc{M}^{\star}) = \underset{|\mc{M}|=M}{\min}~ \max_{H \in H_u } P_{E}( H, \mc{M} ) 
\end{align*}
This can be approximated by using the decoupling of hypotheses in different areas.
Recall the solution to \ref{min-max-hyp-all} repeated above, is the distributed detector in Algorithm \ref{alg-decentralized-ML-detector} where each area performs a local hypothesis $\hat{H}, \hdots, \hat{H}_M$.
The complete hypothesis is only correct if every local detection output is correct.
So for any hypothesis, we have following lower bound:
\begin{align}
\min_{H \in H_u} Pr( \hat{H} = H;& \mc{M} ) \nonumber \\
                                                   &= \min_{H \in H_u } \left( \prod_{ \forall A_i,~H_i  } Pr( \hat{H}_i = H_i; \mc{M} )  \right)         \label{pr-hu-lower-bound-line1}  \\
						 &\geq \prod_{ \forall A \in \mc{A} } \underset{H \in \mc{H}(A) }{\min} Pr( \hat{H} = H; A )         \label{pr-hu-lower-bound-line2}   \\
					 	 &= \prod_{ \forall A \in \mc{A} } P^{min}_{C}(A).   							 	    \label{pr-hu-lower-bound-line3} 
\end{align}

Line \eqref{pr-hu-lower-bound-line1} follows from the decoupling of the decentralized detector.
The overall MAP decision can be correct only if each local MAP decision is correct.
For any $H \in \mc{H}_u$ the probability of each area making a correct decision is always greater than the worst case probability of correct decision for each area.
We interchange the sensor placement and area notation since local areas are constructed from sensor placements.
Here $P^{min}_{C}(A)$ is the minimum probability of correct detection within a local area $A$.
This lower bound can be used to first upper bound the optimal $\alpha^{\star}$.
Finally, only an approximate solution to the upper bound is formulated.

\begin{align}
\alpha(\mc{M}^{\star}) &= \underset{|\mc{M}|=M}{\min}~ \max_{H \in H_u } P_{E}( H, \mc{M} )                          \label{alpha-opt-line1}   \\
	                           &= \underset{|\mc{M}|=M}{\min}~ \max_{H \in H_u } (1 - P_{C}( H, \mc{M} ) )                 \label{alpha-opt-line2}    \\
	                           &= 1 - \underset{|\mc{M}|=M}{\max}~ \min_{H \in H_u } P_{C}( H, \mc{M} )                    \label{alpha-opt-line3}    \\	                           
                                   &\leq 1 - \underset{|\mc{M}|=M}{\max}~ \prod_{ \forall A \in \mc{A} } P^{min}_{C}(A)       \label{alpha-opt-line4}    \\
                       	          &\approx 1 - \underset{|\mc{M}|=M}{\max}~\min_{ \forall A \in \mc{A} } P^{min}_{C}(A)   \label{alpha-opt-line5}    \\
                                   &= 1 -  \underset{|\mc{M}|=M}{\max}~\max_{ \forall A \in \mc{A} } (1 - P^{max}_{E}(A) )     \\                   
                                   &= \underset{|\mc{M}|=M}{\min}~\max_{ \forall A \in \mc{A} } P^{max}_{E}(A)
\end{align}
Optimization \ref{min-max-hyp-all} is identical to \eqref{alpha-opt-line2}, since the probability of a single hypothesis error can be exchanged for it's compliment.
The min-max to max-min change is due to the negative sign in \ref{alpha-opt-line3}.  
In line \ref{alpha-opt-line3}, instead of maximizing the minimum correct probability over all hypotheses we maximize a computationally tractable lower bound which is 
the product $\prod_{ \forall A \in \mc{A} } P^{min}_{C}(A)$.
In \ref{alpha-opt-line5} we introduce a close approximate solution which is the following:
Instead of maximizing the product of $P^{min}_{C}(A)$ for each A, it is sufficient to maximizing the minimum of each $P^{min}_{C}(A)$
Experimentally the two solutions have been shown identical for a large number instances, and only sub-optimal in a small number of cases where the gap is small.     
    
\begin{figure}[h]
\hspace{-4mm}
\subfigure[][]{
\includegraphics[width=0.24\textwidth, height=0.24\textwidth]{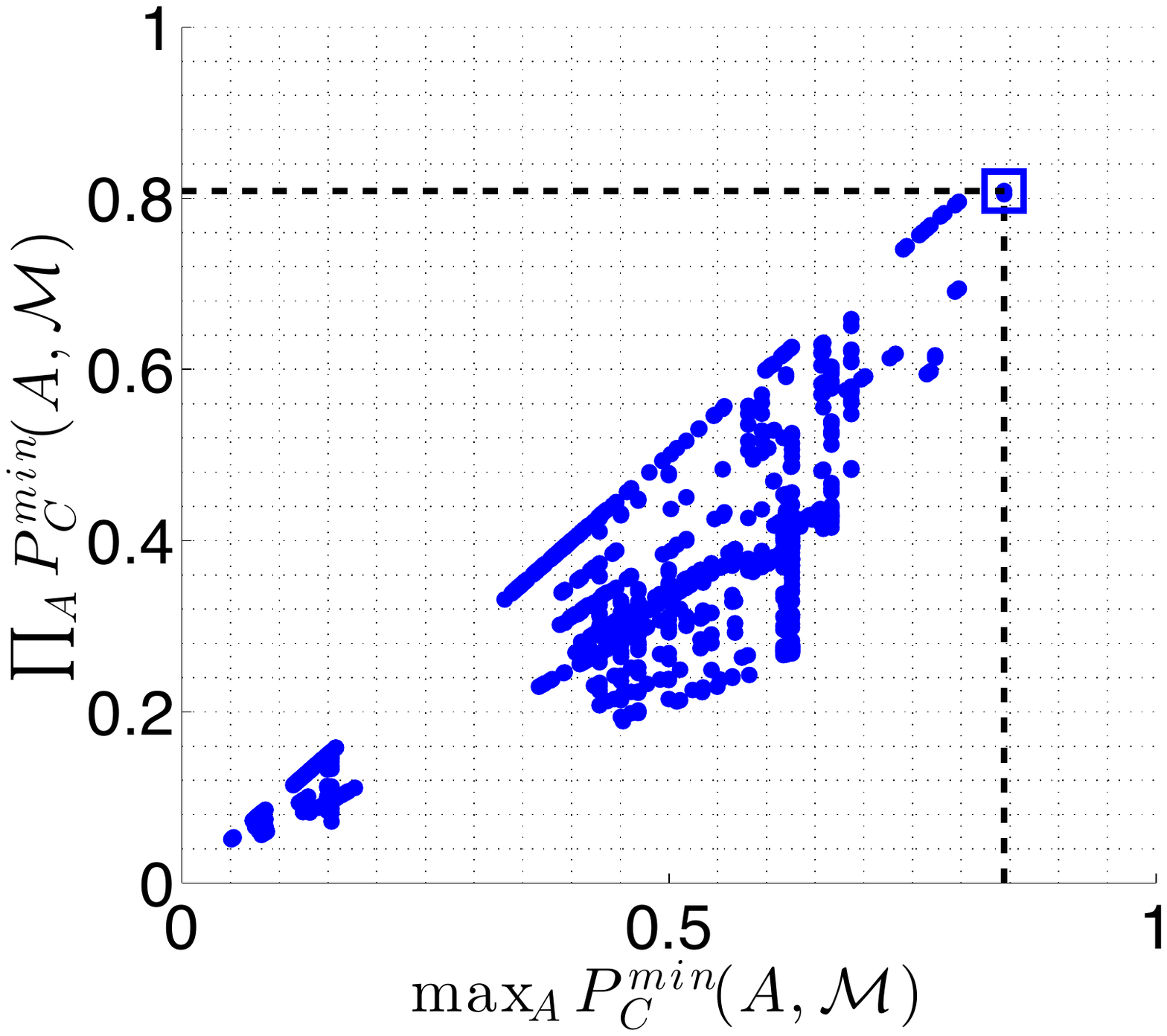} 
\label{fig:opt_approx_good_example}
}
\hspace{-4mm}
\subfigure[][]{
\includegraphics[width=0.24\textwidth, height=0.24\textwidth]{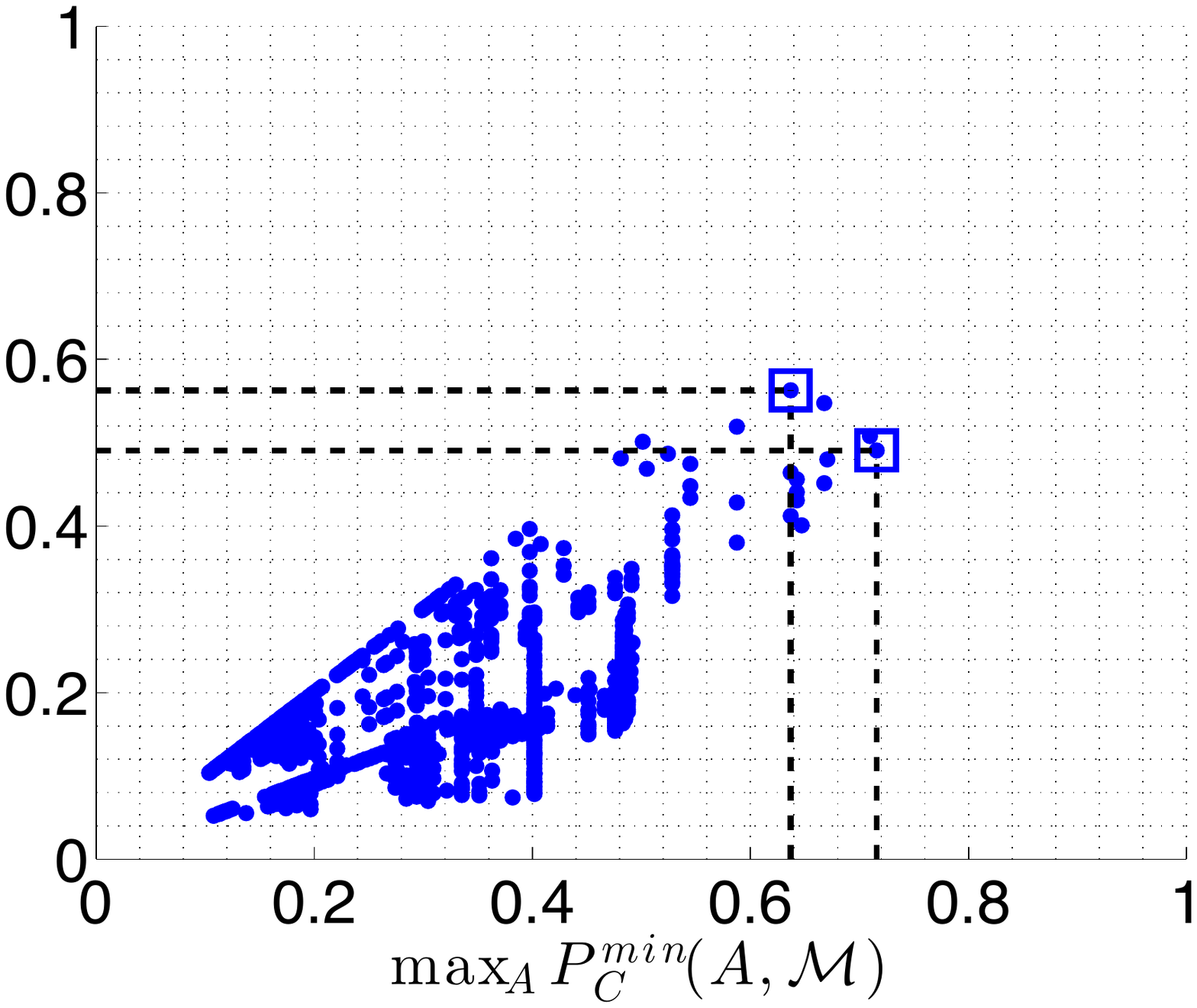} 
\label{fig:opt_approx_bad_example}
}
\caption[]{
\ref{fig:opt_approx_good_example} Brute force placement evaluation where the two objectives are identical.
\ref{fig:opt_approx_bad_example} Brute force placement evaluation where the two objectives differ. 
} 
\label{}
\end{figure}   

Experimentally the two solutions have been shown identical for a large number instances, and only sub-optimal in a small number of cases where the gap is small. 
Figure \ref{fig:opt_approx_good_example}, \ref{fig:opt_approx_bad_example}, shows a pair of randomly generated trees with $N=15$ nodes with random loads and a forecast coefficient of variation of $0.02$.
In both cases, the bottom up placement was used to determine $\mc{M}^{g}$, where $|\mc{M}^{g}| = 5$.

A brute force enumeration of all ${15 \choose 5}$ placements is evaluated for $\max_{ \forall A \in \mc{A} } P^{min}_{C}(A, \mc{M})$ and $\prod_{ \forall A \in \mc{A} } P^{min}_{C}(A, \mc{M})$.

In both cases, there is a strong correlation between the two solutions.  
{\color{black} The two solutions are not however equal
Figure \ref{fig:opt_approx_good_example}, the solutions are identical, while in \ref{fig:opt_approx_bad_example}, the two solutions differ by $7.2 \%$.
Intuitively they should intuitively be very close to each other.
Decreasing one area error will increase the other area's error due to the monotonic growth of $P^{max}_{E}(A)$ and the finite tree size.
Therefore, maximizing the product of all the terms tends to a solution where each area error is as close to each other as possible.
Minimizing the maximum error often leads to such a solution, since we must trade off one area error for another.
}
\subsection{Proof of Theorem 1}
\label{Appendix-Proof-of-Theorem-1}

We prove Theorem 1 by showing the following:
\begin{enumerate}
\item The objective function $P^{max}_{E}(A)$ monotonically increases  for nested areas.
\item Algorithm \ref{tree-placement-algorithm}, will recover the solution to \ref{feasibility-problem}
\end{enumerate}   
First we prove propositions \ref{prop-intersection-definition} and state a conjecture shown to hold in large scale simulation experiments.
These are needed to prove Lemma \ref{lemma-MMDP-non-decreasing} which is needed to prove Theorem 1.

First consider the following:
\begin{define}
For a single pairwise test, $H_{i}$ vs $H_{j}$ we have the following decision region: 
\begin{align}
r_{i,j} = \{ \mb{s} \in R^{M}: \Pr(\mb{s} | H_i ) \geq \Pr(\mb{s} | H_j ) \}.
\end{align}
\end{define}
The observation space is therefore partitioned into two regions.  
So the detector is the following:
\begin{align}
\hat{H} =    
	\begin{cases}
   	 	H_i, & \mathbf{s} \in r_{ij}        \\
    		H_j, & \mathbf{s} \notin r_{ij}.  \\
	\end{cases}
\end{align}
For the one-to-many ML test: $H_i$ vs $ \forall H_j \in H$, we have an acceptance region defined as:
\begin{align}
R_{\mc{H}}(H_i) = \{ \mb{s} \in R^{M}: \Pr(\mb{s} | H_i ) \geq \Pr(\mb{s} | H_j )~\forall H_j \in \mc{H}\}.
\end{align}

\begin{lem}
\label{prop-intersection-definition} 
An equivalent definition is 
\begin{align}
R(H_i) = \bigcap_{ j: H_j \in \mc{H} } r_{i,j}.
\end{align}
\end{lem}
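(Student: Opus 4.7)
The plan is to prove the set equality $R_{\mc{H}}(H_i) = \bigcap_{j: H_j \in \mc{H}} r_{i,j}$ by straightforward double inclusion, since this is essentially a restatement of the universal quantifier in the definition of $R_{\mc{H}}(H_i)$ as an intersection indexed by alternatives. I would first unpack both sides using only the set-builder definitions given above and then verify that the membership conditions on the two sides are logically equivalent.

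For the forward inclusion, I would take an arbitrary $\mb{s} \in R_{\mc{H}}(H_i)$. By definition this means $\Pr(\mb{s}\mid H_i) \geq \Pr(\mb{s}\mid H_j)$ for every $H_j \in \mc{H}$. Fixing any particular index $j$, this inequality is exactly the membership condition for $r_{i,j}$, so $\mb{s} \in r_{i,j}$ for every $j$, and hence $\mb{s} \in \bigcap_{j} r_{i,j}$. For the reverse inclusion, take $\mb{s} \in \bigcap_{j} r_{i,j}$; then for each $H_j \in \mc{H}$ we have $\mb{s} \in r_{i,j}$, which by definition of $r_{i,j}$ gives $\Pr(\mb{s}\mid H_i) \geq \Pr(\mb{s}\mid H_j)$. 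Since this holds for every $j$, the universal condition in the definition of $R_{\mc{H}}(H_i)$ is satisfied, so $\mb{s} \in R_{\mc{H}}(H_i)$.

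There is no real technical obstacle here; the content of the lemma is essentially a logical identity converting $\forall j$ inside set-builder notation into an intersection over $j$. The only minor care needed is to handle the edge cases cleanly, namely when $\mc{H} = \{H_i\}$ (so the intersection is over an empty alternative set and equals the whole observation space, consistent with the definition of $R_{\mc{H}}(H_i)$ in that degenerate case), and to note that ties where $\Pr(\mb{s}\mid H_i) = \Pr(\mb{s}\mid H_j)$ are handled identically by both sides because both definitions use the non-strict inequality $\geq$. Writing the proof in two short inclusions therefore suffices, and no additional machinery about the Gaussian likelihood structure of Section~\ref{subsection-Decoupling-the-Joint-Likelihood:} is needed.
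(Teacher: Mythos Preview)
Your proposal is correct and takes essentially the same approach as the paper: both arguments rest on the logical equivalence between the universal quantifier in the definition of $R_{\mc{H}}(H_i)$ and an intersection over the alternatives $H_j$. The paper presents this as a single chain of set equalities starting from $\bigcap_j r_{i,j}$, whereas you write it as an explicit double inclusion (and add remarks on the degenerate case and ties), but the content is identical.
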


\begin{proof} Using the definition of the right hand side, we have
\begin{align*}
\bigcap_{ j \in H } r_{i,j} & = \{ s: s \in r_{i,1} \cap \hdots \cap s \in r_{i,N}  \} 		                          \\
                   	            & = \{ s: \Pr(\mb{s} | H_i ) \geq \Pr(\mb{s} | H_1 ) \cap  \hdots	                          \\ 
		                    &~~~~~~~~~~ \cap \Pr(\mb{s} | H_i ) \geq \Pr(\mb{s} | H_N )   \}                       \\
        				    & = \{ s: \Pr(\mb{s} | H_i ) \geq \Pr(\mb{s} | H_j )~\forall H_j \in \mc{H} \}           \\				
			            & = R_{\mc{H}}(H_i). 										 
\end{align*}
\end{proof}

Note, that this statement is for the maximum likelihood classifier.
Under an arbitrary classifier, a procedure of constructing one-to-many classifier will lead to ambiguity.
See \cite{bishop2006} (pg. 183) for discussion.

\begin{conj}   
\label{conjecture-ML-error-monotonic-variance}
Given a ML detection problem with a set of hypothesis of the form: $s_{k} \sim N(\mu_k, \sigma^2_k + \Delta )$ for $k=1, \hdots, K$.
The missed detection error for each hypothesis will monotonically increase w.r.t $\Delta$.
\end{conj}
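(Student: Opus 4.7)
The plan is to directly analyze $dP_E(H_k;\Delta)/d\Delta$ and show it is non-negative. Write $P_E(H_k;\Delta)=1-\int_{R(H_k;\Delta)} p_k(s;\Delta)\,ds$, where $p_k$ is the density of $N(\mu_k,\sigma_k^2+\Delta)$ and $R(H_k;\Delta)$ is the ML acceptance region. By Lemma \ref{prop-intersection-definition}, $R(H_k;\Delta)=\bigcap_{j\neq k} r_{k,j}(\Delta)$, and because each pairwise boundary $\partial r_{k,j}$ is the solution set of a quadratic in $s$, $R(H_k;\Delta)$ is a finite union of intervals whose endpoints are differentiable functions of $\Delta$ away from a discrete set of critical values. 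Applying Leibniz's rule splits $dP_E/d\Delta$ into an interior contribution $-\int_{R(H_k)}\partial_\Delta p_k\,ds$ and boundary motion terms of the form $-p_k(s^*)\dot{s}^*$ summed over endpoints $s^*$ of $R(H_k;\Delta)$.

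The key analytical tool is that Gaussian densities satisfy the heat equation: $\partial_\Delta p_k=\tfrac{1}{2}\partial_s^2 p_k$. Substituting this into the interior term and integrating by parts converts it into boundary flux terms $-\tfrac{1}{2}\partial_s p_k(s^*)$ (with appropriate outward-normal signs) at the same endpoints $s^*$. For the boundary motion terms, differentiating the defining equation $p_k(s^*;\Delta)=p_j(s^*;\Delta)$ of each pairwise boundary implicitly in $\Delta$ yields
\begin{equation}
\dot{s}^*=\frac{\partial_\Delta p_j-\partial_\Delta p_k}{\partial_s p_k-\partial_s p_j}\bigg|_{s^*},
\end{equation}
so that $p_k\dot{s}^*$ can be rewritten entirely in terms of derivatives of $p_k$ and $p_j$ at the boundary point. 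Combining the flux terms and the motion terms, I expect the contribution of each boundary point $s^*$ (sitting on a pairwise boundary between $H_k$ and some $H_j$) to collapse to an expression that is manifestly non-negative once one uses both $p_k(s^*)=p_j(s^*)$ and the heat equation applied to $p_j$ as well. The non-negativity should come from the sign pattern of $\partial_s p_k-\partial_s p_j$ dictated by the ML decision rule: as $s$ crosses $s^*$ out of $R(H_k)$ into $R(H_j)$, $p_k-p_j$ goes from positive to negative, so $\partial_s(p_k-p_j)(s^*)$ has a definite sign.

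The main obstacle is step three: showing, after the cancellations, that each boundary-point contribution is non-negative. I would first carry this out for $K=2$, where $R(H_k;\Delta)$ has at most two endpoints and the algebra is explicit; here log-concavity of Gaussians and the quadratic form of the log-likelihood ratio should force the sign. The extension to $K$ hypotheses then uses the intersection representation $R(H_k)=\bigcap_j r_{k,j}$: generically, every endpoint of $R(H_k;\Delta)$ lies on exactly one pairwise boundary $\partial r_{k,j}$, so $dP_E/d\Delta$ decomposes as a sum of contributions of the pairwise type already handled. A secondary obstacle is the set of critical $\Delta$ values at which two boundary points merge, a region splits, or a pairwise boundary becomes inactive in the intersection; at such $\Delta$, one must argue continuity of $P_E$ via dominated convergence and check that monotonicity is preserved across each transition by comparing one-sided derivatives.

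As a backup route in case the boundary-flux calculation proves intractable for large $K$, I would attempt a coupling argument: write $s_\Delta=s_0+\sqrt{\Delta}\,Z$ with $s_0\sim N(\mu_k,\sigma_k^2)$ and $Z\sim N(0,1)$ independent, so that $s_{\Delta_2}$ is a noisier version of $s_{\Delta_1}$ whenever $\Delta_2>\Delta_1$. A data-processing argument immediately gives monotonicity of the Bayes risk, but since the ML detector is not per-hypothesis optimal, the per-hypothesis statement requires an additional ingredient specific to Gaussian likelihoods, most likely an invariance or symmetry property of the joint decision region under the noise convolution. I suspect the heat-equation calculation above is the more direct path, and that the coupling view is chiefly useful as a sanity check on the sign of the final expression.
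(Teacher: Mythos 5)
First, be aware that the paper does not prove this statement at all: it is posed as Conjecture~\ref{conjecture-ML-error-monotonic-variance} and supported only by the remark that it ``is seen to hold with 200,000 random problem instantiations.'' There is therefore no paper proof to compare against; a correct analytical argument would be a genuine addition. Judged on its own, your plan has the right skeleton for such an argument: the observations are scalar, the Gaussian density satisfies the heat equation $\partial_\Delta p_k = \tfrac{1}{2}\partial_s^2 p_k$ in the variance parameter, the acceptance region $R(H_k)=\bigcap_{j\neq k} r_{k,j}$ is a finite union of intervals with endpoints on quadratic pairwise boundaries, and Leibniz plus integration by parts does reduce $dP_E/d\Delta$ to a sum of per-endpoint contributions.

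The gap is that the entire content of the conjecture sits in the step you defer: ``I expect the contribution of each boundary point \ldots to collapse to an expression that is manifestly non-negative'' is an assertion, not an argument, and it is precisely what the paper's Monte Carlo evidence was standing in for. The sign fact you do establish --- that $\partial_s(p_k-p_j)(s^*)$ has a definite sign in the outward crossing direction --- controls only the denominator of $\dot{s}^*$; the combined endpoint term $-\tfrac{1}{2}\partial_s p_k(s^*)\cdot(\text{normal sign})-p_k(s^*)\dot{s}^*$ mixes first and second derivatives of both densities, and its sign is not forced by log-concavity alone. Until that algebra is carried out, even for $K=2$ with unequal variances (where $R(H_k)$ may be a bounded interval or the complement of one, so both endpoint configurations must be checked), the proposal is a research plan rather than a proof. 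The secondary issues you flag are also real and not dispatched: for $K>2$ the set of active pairwise boundaries changes with $\Delta$, endpoints can merge and regions can vanish, so one-sided derivatives must be compared across each transition. Your backup coupling route, as you yourself note, only yields monotonicity of the average (Bayes) error via data processing; the per-hypothesis claim does not follow because the ML rule is not per-hypothesis optimal, so that route needs the same missing Gaussian-specific ingredient. In short: a sound strategy, but the decisive non-negativity computation is absent, and the statement remains a conjecture under this proposal.
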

This is seen to hold with 200,000 random problem instantiations.
We now state Lemma \ref{lemma-MMDP-non-decreasing}.

\begin{define}
Two area networks are nested $A \subset A^{\prime}$ if the vertices of each area $V$, $V^{\prime}$ are such that $V \subset V^{\prime}$.  
\end{define}   
\begin{lem}
\label{lemma-MMDP-non-decreasing}
Given two area networks $A$ and $A^{\prime}$ where $A \subset A^{\prime}$, $P^{max}_{E}(A) \leq P^{max}_{E}(A^{\prime})$. 
\end{lem}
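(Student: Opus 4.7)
The plan is to show that for every hypothesis $H \in \mc{H}^{+}(A)$, its individual missed-detection probability is no larger in the smaller area than in the larger area, and then to observe that enlarging the maximization set from $\mc{H}^{+}(A)$ to $\mc{H}^{+}(A^{\prime})$ can only increase the maximum. The argument rides on two effects that both push the error upward when $A$ is enlarged to $A^{\prime}$: the acceptance region of each $H$ shrinks because more competing alternatives are present, and the effective measurement $\Delta s$ under any fixed $H$ gets noisier because it aggregates over strictly more forecasted loads.

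First, since $A \subset A^{\prime}$ and an area is determined by its root sensor and descendant sensors, each $H \in \mc{H}^{+}(A)$ can be identified with its natural extension in $\mc{H}^{+}(A^{\prime})$ (the same disconnected edges, with no additional outage in $A^{\prime}\setminus A$). Hence $\mc{H}^{+}(A)\subseteq \mc{H}^{+}(A^{\prime})$ after this identification. By Lemma \ref{prop-intersection-definition}, the acceptance region of $H$ in the larger problem is
\begin{align*}
R_{\mc{H}^{+}(A^{\prime})}(H) = \bigcap_{H_j \in \mc{H}^{+}(A^{\prime})} r_{H,H_j} \;\subseteq\; \bigcap_{H_j \in \mc{H}^{+}(A)} r_{H,H_j} = R_{\mc{H}^{+}(A)}(H),
\end{align*}
so, holding the observation law under $H$ fixed, the missed-detection probability of $H$ is non-decreasing when we switch from $\mc{H}^{+}(A)$ to $\mc{H}^{+}(A^{\prime})$.

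Second, the observation law itself changes: the effective measurement in $A$ sums over the vertices reachable from the root of $A$ that are not disconnected by $H$, while the effective measurement in $A^{\prime}$ sums over all such vertices in the strictly larger area $A^{\prime}$. By the formulas for $\mu(\hat{\mb{x}},H)$ and $\sigma(\hat{\mb{x}},H)$ in Section \ref{subsection-Evaluating-P_MAX-for-tree-networks}, the means shift but, more importantly, the variance increases by a nonnegative quantity $\Delta_H = \sum_{v \in A^{\prime}\setminus A \text{ connected under }H} \sigma^{2}(v)$ for every hypothesis simultaneously. This is exactly the setting of Conjecture \ref{conjecture-ML-error-monotonic-variance} (a common additive variance increment across all hypotheses), which yields that the ML missed-detection error is monotone non-decreasing in $\Delta_H$.

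Combining the two effects gives $P_{E}(H;A) \leq P_{E}(H;A^{\prime})$ for every $H \in \mc{H}^{+}(A)$, and therefore
\begin{align*}
P^{\max}_{E}(A) = \max_{H \in \mc{H}^{+}(A)} P_{E}(H;A) \;\leq\; \max_{H \in \mc{H}^{+}(A)} P_{E}(H;A^{\prime}) \;\leq\; \max_{H \in \mc{H}^{+}(A^{\prime})} P_{E}(H;A^{\prime}) = P^{\max}_{E}(A^{\prime}).
\end{align*}
The main obstacle is the middle step: the acceptance-region shrinkage and the distributional change under $H$ happen simultaneously, so one cannot treat them independently without care, and the variance-monotonicity piece has only been verified empirically through Conjecture \ref{conjecture-ML-error-monotonic-variance}. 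The cleanest way around this is to factor the two changes into the sequence $A \to (\mc{H}^{+}(A^{\prime}),\text{law of }A) \to (\mc{H}^{+}(A^{\prime}),\text{law of }A^{\prime})$ so that each step changes exactly one ingredient and each inequality is justified by exactly one of the two tools above.
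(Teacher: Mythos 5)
Your proposal is correct and follows essentially the same route as the paper's proof: both combine the intersection characterization of acceptance regions (so that adding alternatives can only shrink the region of correct decision) with Conjecture \ref{conjecture-ML-error-monotonic-variance} to handle the common additive variance increment, and both note that the maximum over a superset of hypotheses dominates. The only detail the paper makes explicit that you gloss over is why the mean shift is harmless — it translates both the scalar statistic and the acceptance region by the same amount, so it cancels — but this is a one-line fix and your two-step factorization (change the hypothesis set first, then the observation law) is exactly how the paper argues its Case 2.
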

  
\begin{figure}[h]
\centering
\subfigure[][]{
\label{fig:AN_growth_case_1}  
\includegraphics[width=0.2\textwidth, height=0.2\textwidth]{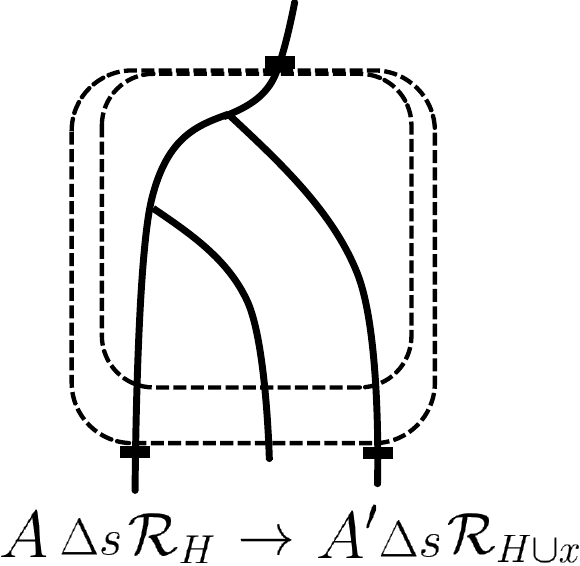} 
}
\subfigure[][]{
\label{fig:AN_growth_case_2}
\includegraphics[width=0.2\textwidth, height=0.2\textwidth]{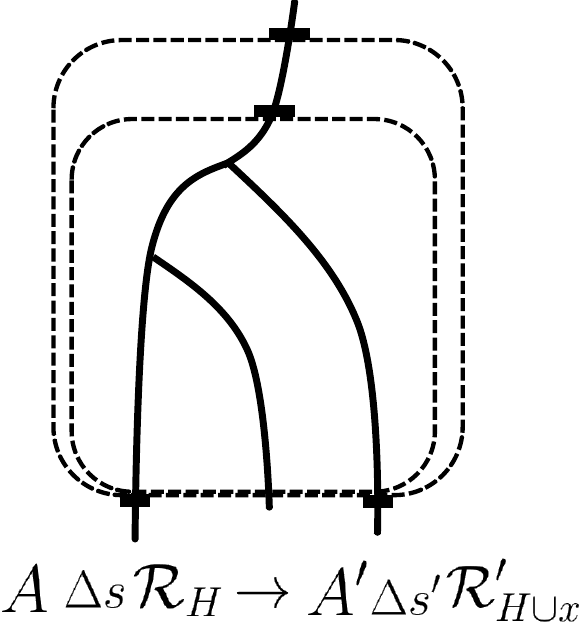} 
}    
\caption[]{
\ref{fig:AN_growth_case_1} 
Case 1 showing growth by adding new nodes by moving \textit{terminal sensor down}. 
The conditional pdf $\Delta s | H$ $\forall h \in H$ does not change.
However the acceptance region shrinks from $R_\mc{H}$ to $R_{\mc{H} \cup x}$.
\ref{fig:AN_growth_case_2} 
Case 2 showing growth by adding new nodes by moving \textit{root sensor up}. 
The conditional pdf $\Delta s^{\prime} | H$ $\forall h \in H$ will change.
Again, the acceptance region shrinks from $R_{\mc{H}}$ to $R^{\prime}_{H \cup x}$.
} 
\label{fig:AN_growth_case}
\end{figure}   
  
\begin{proof}
Given some $\alpha^{*} = P_{\text{err}}^{\max}(A)$, which is the maximum missed detection probability of the local hypothesis in area $A$ which is evaluated at some hypothesis $H^{*}$.
We aim to show that for an enlarged area $A^{\prime}$, the error probability for $H^{\star}$ will \textit{always} be larger, which is quite intuitive.
Therefore the maximum error $P^{\max}_{E}(A^{\prime})$ regardless if $H^{*}$ maximizes the missed detection in the enlarged area.

Expansion of A is analyzed in two cases:
\begin{enumerate}
\item [] \textbf{Case 1} Terminal sensors expand downstream (away from $v_0$) shown in Figure \ref{fig:AN_growth_case_1}.
\item [] \textbf{Case 2} Root sensor of $A$ moves upstream (closer to $v_0$) shown in Figure \ref{fig:AN_growth_case_2}.
\end{enumerate}  

We use the following shorthand: 
Hypotheses for area $A$ and $A^{\prime}$: $\mc{H} \triangleq \mc{H}(A)$, $\mc{H} \cup x_i \triangleq \mc{H}(A^{\prime})$.
The set $x_i$ differ in how the area is enlarged, where $i=1,2$ for case 1 and case 2.
Therefore under case 1, and 2 we have $R_{\mc{H} \cup x_1}(H_i)$ and  $R_{\mc{H} \cup x_2}(H_i)$.
Now we show how the effective measurement distribution (as defined in \eqref{eq:definition-effective-measurement}) and acceptance regions changes under each case.
\begin{enumerate}
\item [] \textbf{Case 1} For all hypothesis $H_i \in \mc{H}$ we have that $\mu^{\prime}_i = \mu_i$ and ${\sigma^{\prime}}^{2}_i = \sigma^{2}_i$.
The distribution $\Delta s | H_i$ is unchanged.
Given that $R_{\mc{H}}(H_i) = \bigcap_{j:H_j \in \mc{H}} r_{ij}$, the new acceptance region is $R_{\mc{H} \cup x_1}(H_i) = \bigcap_{j:H_j \in \mc{H} \cup x_1} r_{ij}$, with $r_{ij}$ from the original alternatives unchanged.
\item [] \textbf{Case 2} For all hypothesis $H_i \in \mc{H}$ we have that $\mu^{\prime}_i = \mu_e + \mu_i$ and ${\sigma^{\prime}}^{2}_i = \sigma^2_e + \sigma^{2}_i$.  
The distribution and acceptance region change $\Delta s | H_i \rightarrow \Delta s^{\prime} | H_i$.
The new acceptance region is the following: $R_{\mc{H} \cup x_2}(H_i) = \{ \bigcap_{j:H_j \in \mc{H}} r^{\prime}_{ij} \} \bigcap \{ \bigcap_{j:H_j \in \mc{H}} r_{ij} \}$, where the acceptance regions under the previous area alternatives are now different.
\end{enumerate}
To see why the distribution $\Delta s | H_i$ changes, first recall that:
\begin{align}
\Delta s_i |H_k &\sim N\left(\sum_{v \in V_i \setminus V_k}^{} \mu(v) - \mu_{T}, \sum_{v \in V_i \setminus V_k}^{} \sigma^2(v)  - \sigma^{2}_{T} \right) \\ 
	                &\sim N\left(\mu_k,  \sigma^{2}_k\right)  \forall H \in \mc{H}  
\end{align}

The terms $\mu_T$ and $\sigma^2_T$ are the sum of loads forecasts and variances of all terminal sensors.
Now moving the root node upstream leads to:
\begin{align}
s^{\prime}_i |H_k &\sim N\left(\sum_{v \in V^{\prime}_i \setminus V_k}^{} \mu(v) - \mu_{T}, \sum_{v \in V^{\prime}_i \setminus V_k}^{} \sigma^2(v) - \sigma^{2}_{T}  \right) \\
			  &\sim N\left( \mu_e + \mu_i,  \sigma^2_e + \sigma^{2}_i\right) \forall H \in \mc{H}.
\end{align}
Therefore changing the position of $s_i$ so as to add additional vertices will increase every original hypothesis mean and variance by the same amount.

Now consider Case 1 first, where we merely add new alternatives, keeping the distributions of $\Delta s | H$ unchanged.  
Here we have:
\begin{align}    
\Pr&(  \Delta s  \in R_{\mc{H}\cup x} (H^{*}) | H^{*} \text{ true} ) 								                                   	        \\
		 	 &= \Pr( \Delta s \in \bigcap_{j:H_j \in \mc{H} \cup x} r_{i,j}~|~H^{*}~\text{true})            \label{eq:lemma1_case1_ln1}   \\
			 &= \Pr ( \{ \Delta s \in \bigcap_{j:H_j \in \mc{H} } r_{i,j}  \}                       			            
		          \cap \{ \bigcap_{j:H_j \in x } \Delta s \in r_{i,j} \} |~H^{*}~\text{true})                    		\label{eq:lemma1_case1_ln2}   \\
		         & \leq \Pr ( \Delta s \in \bigcap_{H_j \in \mc{H}} \in r_{i,j}  |H^{*}~\text{true})     	     	\label{eq:lemma1_case1_ln3}   \\
   		         &= \Pr (  \Delta s \in \mc{R}_{\mc{H}}(H^{*}) | H^{*}~\text{true}).     	                         \label{eq:lemma1_case1_ln4}   
\end{align}    
Line \ref{eq:lemma1_case1_ln1} defines the area $R_{\mc{H}}(H^*)$ using proposition \ref{prop-intersection-definition}.
This is split into the the intersection of two separate events using our definitions of $\mc{H}$ and $\mc{H} \cup x$.
Next we use the fact that $\Pr( A \cap B) \leq \Pr(A)$.
Therefore if $\Pr ( \Delta s \in R_{\mc{H}}(H^{*}) | H^{*}~\text{true})$ $\leq$ $\Pr (  \Delta s \in R_{\mc{H} \cup x_2}(H^{*}) | H^{*}~\text{true})$ then $P^{max}_{E}(A) \leq P^{max}_{E}(A^{\prime})$.
  
We next prove Case 2 where not only are more alternatives considered for $H^{*}$, but $\Delta s | H^{*}$ is translated by fixed amount $\mu_e$, $\sigma^2_e$ in mean and variance.
This implies that:
\begin{align}
 \Pr(  \Delta s^{\prime}  \in & R_{\mc{H} \cup x} | H^{*}~\text{true} )   \nonumber \\ 
                  &   \leq \Pr(  \Delta s^{\prime} \in R_{\mc{H}}(H^{*})  | H^{*}~\text{true} )                     \label{eq:lemma1_case2_ln1}  \\
		 &= \Pr(  \Delta s^{\prime} - \mu_e   \in R_{\mc{H}}(H^{*}) - \mu_e | H^{*}~\text{true} )    \label{eq:lemma1_case2_ln2}  \\
		 &\leq \Pr(  \Delta s   \in R_{\mc{H}} | H^{*}~\text{true} ). 			                               \label{eq:lemma1_case2_ln3}					                 
\end{align}
     
The inequality in line \ref{eq:lemma1_case2_ln1} uses the identical procedure in \ref{eq:lemma1_case1_ln1} - \ref{eq:lemma1_case1_ln3}.
In line \ref{eq:lemma1_case2_ln2} we are merely shifting the gaussian $\Delta s$ and the acceptance region by $\mu_e$ using a shorthand notation.
This can be done since the MAP test is scalar.
Finally the inequality in line \ref{eq:lemma1_case2_ln3}, follows from conjecture \ref{conjecture-ML-error-monotonic-variance}.
\end{proof}

We can now prove Theorem \ref{tree-network-optimality}.

\begin{proof}
The bottom up solution $\mc{M}^{g}$ moved to the root node enlarging each area network so that each $P^{max}_{E}(A) < P^{target}$ but any further up will violate the target area.
Consider some other method produces a solution $\mc{M}^{\prime}$ which minimizes the number of sensors the error constraint, where $|\mc{M}^{\prime}| < |\mc{M}^{g}|$.
This implies that some area must increase in size, as compared to the $\mc{M}^{g}$ solution, and from Lemma \ref{lemma-MMDP-non-decreasing}, some area will violate the error constraint.
\end{proof}

\subsection{Greedy and Optimal Tree Action}
\label{appendix-greedy-optimal-tree-discussion}      

{\bf optimal}-{\bf tree}-{\bf action}:  The correct action at a node-junction is to enumerate each $2^{| {\bf child}(v_r)| - 1}$ possible trees and process them until only one remains closest to the root.
In the example in Figure \ref{fig:tree_action}, we must process to the root node processing both $A_2$ and $A_3$ in parallel as separate problem instances, with its own $v_t$ and $M^{g}$.
This is in contrast to the greedy strategy that chooses one placement and moves on.
Each problem instance is then processed, until the area objective function violates  $P^{\text{target}}$.
All but one problem instance is kept; the one where $v_t$ was closest to the root vertex.

It turns out that the {\bf greedy}-{\bf tree}-{\bf action} and the {\bf optimal}-{\bf tree}-{\bf action} procedures are in practice extremely close as discussed in Appendix \ref{appendix-greedy-optimal-tree-discussion}.
Algorithm \ref{tree-placement-algorithm} can only implement this technique, since we do not grow the search space with multiple bottom up scenarios.

\begin{figure}[h]
\subfigure[][]{
\includegraphics[scale=0.55]{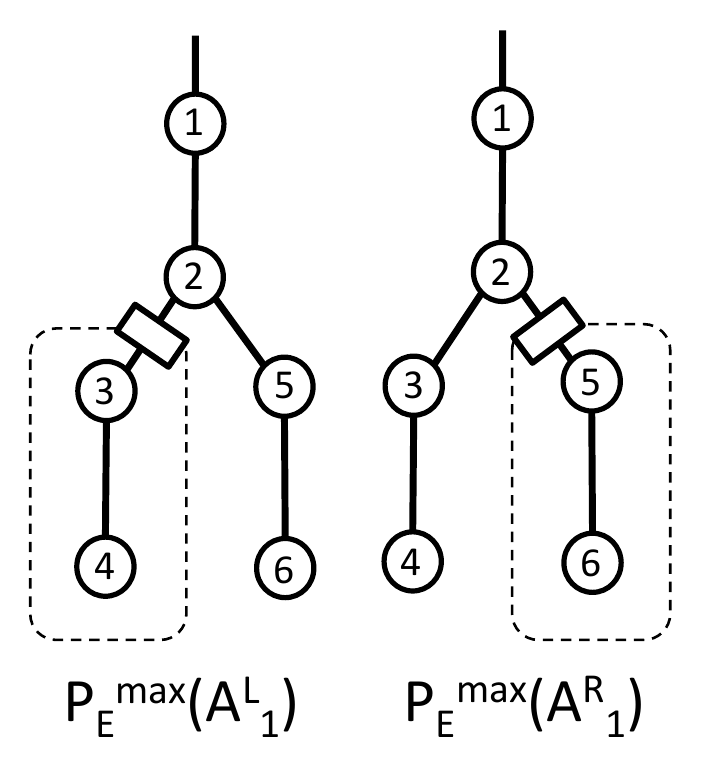}  
\label{fig:greedy_vs_optimal_counterexample_1}
}
\subfigure[][]{
\includegraphics[scale=0.55]{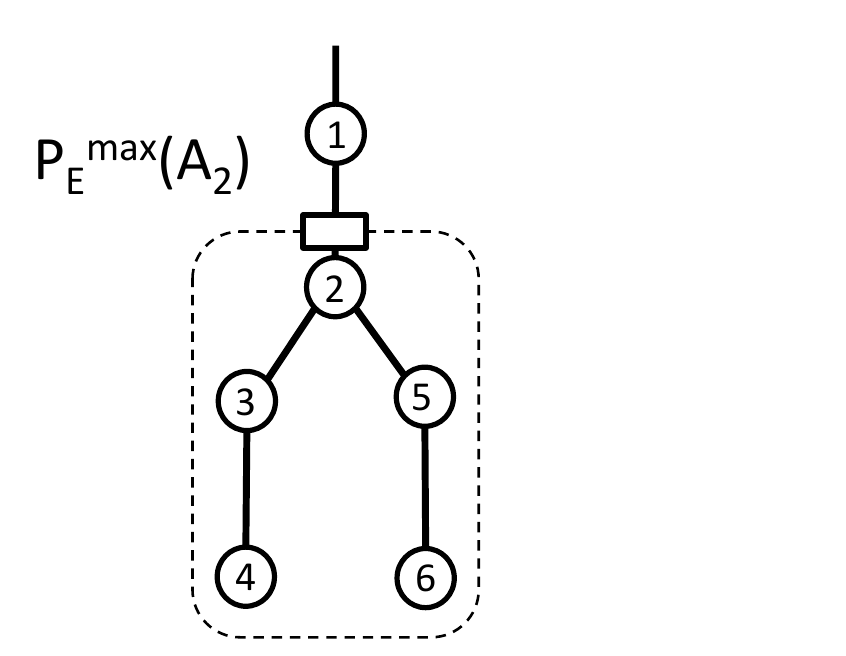}  
\label{fig:greedy_vs_optimal_counterexample_2}
}
\hspace{-5mm}
\subfigure[][]{
\includegraphics[scale=0.55]{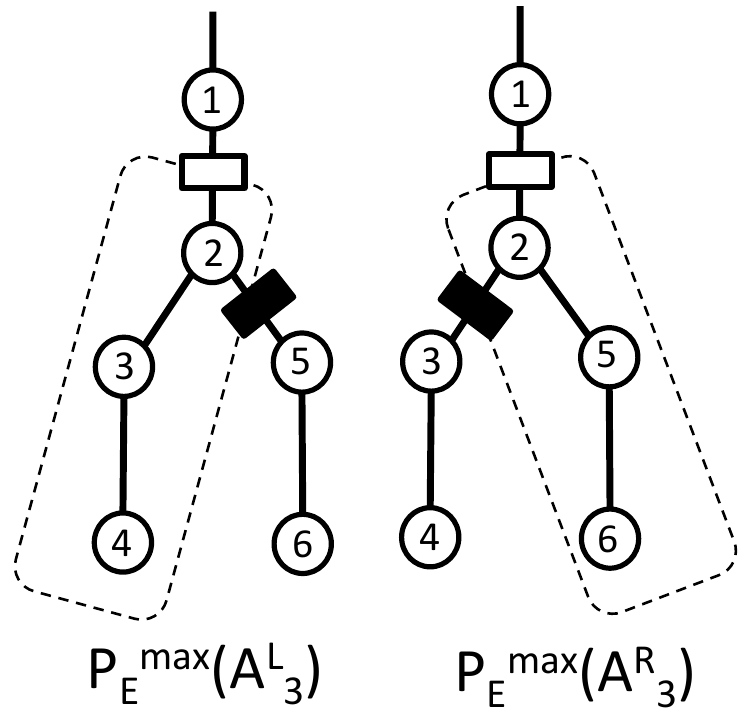}  
\label{fig:greedy_vs_optimal_counterexample_3}
}
\subfigure[][]{
\includegraphics[scale=0.55]{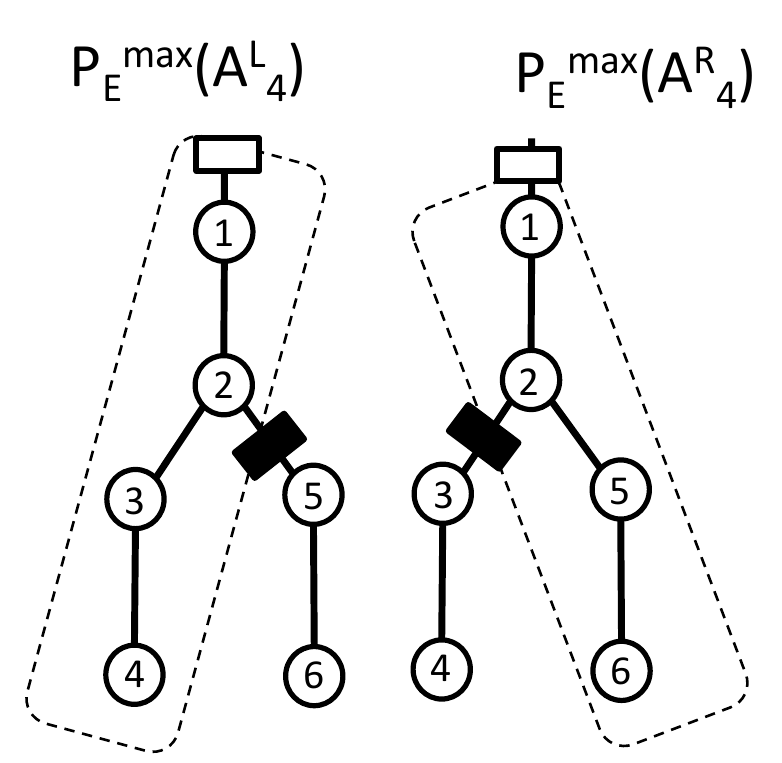}  
\label{fig:greedy_vs_optimal_counterexample_4}
}
\caption{   
\ref{fig:greedy_vs_optimal_counterexample_1} Both areas have maximum error smaller than target error: $P^{max}_{E}(A^{L}_{1}) < P^{target}$, $P^{max}_{E}(A^{R}_{1}) < P^{target}$.
\ref{fig:greedy_vs_optimal_counterexample_2} The area's are combined, and tested where $P^{max}_{E}(A_{2})  > P^{target}$.
\ref{fig:greedy_vs_optimal_counterexample_3} The greedy choice will choose $A^{L}_{3}$ as the candidate:  $P^{max}_{E}(A^{L}_{3}) < P^{max}_{E}(A^{R}_{3})  < P^{target}$. 
\ref{fig:greedy_vs_optimal_counterexample_4} The correct choice is $A^{R}_{4}$ since $P^{max}_{E}(A^{R}_{3}) < P^{target} < P^{max}_{E}(A^{L}_{3})$. }
\label{fig:greedy_vs_optimal_counterexample}
\end{figure}

The bottom up placement using {\bf greedy}-{\bf tree}-{\bf action} relies on moving the current node as close to root as possible while keeping the error $< P^{\text{target}}$.
At a juncture, recall that the current network is chosen as the one which minimizes the maximum error, and continues with that choice.
To see why this is sub optimal, consider Figure \ref{fig:greedy_vs_optimal_counterexample} shows a typical subtree where we have the following events:
\begin{enumerate}
\item  (Figure \ref{fig:greedy_vs_optimal_counterexample_1}) The bottom up algorithm will evaluate $P^{max}_{E}(A^{L}_{1}) < P^{target}$ and $P^{max}_{E}(A^{R}_{1}) < P^{target}$.
\item (Figure \ref{fig:greedy_vs_optimal_counterexample_2})
Move onto evaluating the combined area network with the parent node. 
Evaluating $P^{max}_{E}(A_2)  > P^{\text{target}}$, we must choose in a greedy manner via {\bf greedy}-{\bf tree}-{\bf action}.
\item 
(Figure \ref{fig:greedy_vs_optimal_counterexample_3})
Evaluating both $P^{max}_{E}(A^{L}_3)$ and $P^{max}_{E}(A^{R}_3)$, where $P^{max}_{E}(A^{L}_3) < P^{max}_{E}(A^{R}_3) < P^{target}$.
The greedy choice will keep $A^{L}_3$ and discard the $A^{R}_3$.
\item 
(Figure \ref{fig:greedy_vs_optimal_counterexample_4})
The optimal choice is in fact $A^{R}_4$ since $P^{max}_{E}(A^{R}_4) < P^{target} < P^{max}_{E}(A^{R}_4)$.
\end{enumerate}
The greedy choice will choose $A^{L}_3$ and be forced to place a sensor in $A^{L}_4$, while the optimal can continue upstream.
The following numerical example will lead to this: $\mu_i = 1, \forall i$ and $\sigma^2 = \{ 0.0599,~0.0125,~0.0835,~0.0945,~0.0906,~0.0607\}$ with $P^{target} = 0.1923$.

The triplet of scalar hypotheses which cause this are shown in Table \ref{tab:greedy_optimal_countexample}.
\begin{table}[h]
\centering
\caption{PNNL Test feeders used in case study}
\begin{tabular}{@{}cccccc@{}}
\toprule
Area                  &  Flow                           &  $\mc{H}^{+}(A, b)$        &  $\mu_k$        & $\sigma^2_k$                & $P^{max}_{E}(A)$     \\
\cmidrule{1-6} 
$A^{L}_{3}$       &  $s_2 > 0$,     &  $e_3, e_4, \emptyset$    &   $1,~2,~3$     & $0.0125,~ 0.1031,~0.1637$   &  0.1083   \\
                          &    $s_5 > 0$    &                                           &                        &                                                &               \\          
$A^{R}_{3}$       & $s_2 > 0$,     &   $e_5, e_6, , \emptyset$  &   $1,~2,~3$     & $0.0125 ,~0.0960,~0.1905$   &   0.0961  \\
                          &  $s_3 > 0$      &                                           &                        &                                                &                \\
$A^{L}_{4}$       &  $s_1 > 0$,     &  $e_3, e_4, \emptyset$      &   $2,~3,~4$     & $0.0724,~0.1558,~0.2504$   &  0.1885   \\
                          &   $s_3 > 0$     &                                            &                       &                                                &               \\
$A^{R}_{4}$      & $s_1 > 0$,      &   $e_5, e_6, , \emptyset$   &  $2,~3,~4$     & $0.0724,~0.1630,~0.2236$   &  0.1960   \\
                         &    $s_3 > 0$     &                                            &                      &                                                 &                \\
\bottomrule 
\label{tab:greedy_optimal_countexample}
\end{tabular}  
\end{table}  

\begin{figure}[h]
\centering
\subfigure[][]{
\includegraphics[scale=0.4]{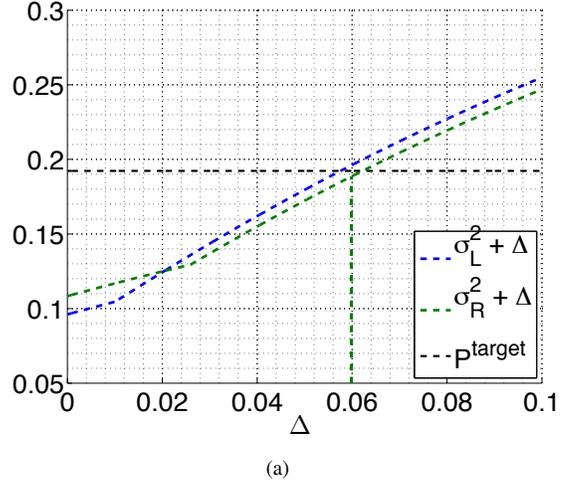}  
\label{fig:greedy_vs_optimal_counterexample_delta_error}
}  
\caption{
The maximum probability of error over three hypotheses as a function of a translation of each hypothesis variance.
At the point $\Delta = 0$, $P(A^{L}_{3} ) < P(A^{R}_{3} )$, and in point $\Delta = 0.0599$, $P(A^{L}_{4} ) > P(A^{R}_{4} )$
}
\end{figure}

Notice that in the hypothesis means and variances, we have the following:
\begin{align*}
\mu_k(A^{L}_4)      &= \mu(v_1) + \mu_k(A^{L}_3)     \\ 
\mu_k(A^{R}_4)      &= \mu(v_1) + \mu_k(A^{R}_3)     \\
\sigma^2(A^{L}_4)  &= \Delta + \sigma_k(A^{L}_3)   \\ 
\sigma^2(A^{R}_4) & = \Delta  + \sigma_k(A^{R}_3) \\ 
\end{align*}
Where $\mu(v_1) = 1$, and $\Delta  = \sigma^2(v_1) = 0.0599$. 
The translation of mean and variance causes the maximum error over an area to switch from $A^{L}$ to $A^{R}$.
Recall that conjecture \ref{conjecture-ML-error-monotonic-variance} stated that the maximum error in such a case will monotonically increase with respect to some translation in variances.
Regardless, there is no domination between any pair of triplets, whereby one will always be greater than another under the same translation.

This can be seen in Figure \ref{fig:greedy_vs_optimal_counterexample_delta_error} where the maximum hypothesis error between the tuples is shown with respect to translation $\Delta$.
As indicated, any $\Delta > 0.02$ will cause their ordering to change, with the counterexample shown to be beyond this point.

However, we should not that although the transition does occur, the gap is quiet small so any realistic gap will be very small.
Since $P^{target}$ is between  $P(A^{L}_{4} )$  and $P(A^{R}_{4} )$ it is very unlikely to occur.
In finding a counterexample $10000$ monte carlo runs produced $5$ examples.
For this reason, the greedy and optimal placement strategies will have identical outcomes of random instances.
\end{document}